\DeclareMathOperator*{\argmax}{argmax}
\DeclareMathOperator*{\argmin}{argmin}
\newcommand{\subscript}[2]{$#1 #2$}
\begin{document}

\title{Last Round Convergence and No-Instant Regret in Repeated Games with Asymmetric Information}

\author{\name Le Cong Dinh \email L.C.Dinh@soton.ac.uk \\
        \name Long Tran-Thanh \email L.Tran-Thanh@soton.ac.uk  \\
       \addr School of Electronics and Computer Science\\
       University of Southampton\\
       United Kingdom
       \AND
       \name Tri-Dung Nguyen \email T.D.Nguyen@soton.ac.uk  \\
       \name Alain B. Zemkoho \email A.B.Zemkoho@soton.ac.uk  \\
       \addr School of Mathematical Sciences \& CORMSIS\\
       University of Southampton\\
       United Kingdom}
\editor{}
\maketitle
\begin{abstract}
This paper considers repeated games in which one player has more information about the game than the other players. In particular, we investigate repeated two-player zero-sum games where only the column player knows the payoff matrix $A$ of the game. Suppose that while repeatedly playing this game, the row player chooses her strategy at each round by using a no-regret algorithm to minimize her (pseudo) regret. We develop a no-instant-regret algorithm for the column player to exhibit last round convergence to a minimax equilibrium. We show that our algorithm is efficient against a large set of popular no-regret algorithms of the row player, including the multiplicative weight update algorithm, the online mirror descent method/follow-the-regularized-leader, the linear multiplicative weight update algorithm, and the optimistic multiplicative weight update.
\end{abstract}

\begin{keywords}
  repeated games, asymmetric information, no-instant-regret, last round convergence
\end{keywords}

\section{Introduction}
 
Repeated two-player zero-sum games form one of the most studied classes of repeated games in game theory. 
In this setting, thanks to Blackwell's famous approachability theorem, if a player's strategies are generated by algorithms (i.e., policies) with a special property called ``no-regret", one can prove that, on average, that player does not perform worse than the best-fixed strategy in hindsight.
A direct implication of this result is that if both players choose to play such no-regret algorithms, their average payoffs will converge to the game's minimax value. Put differently, the players' strategies will converge to a minimax equilibrium on average (see e.g.,  \cite{Nicolo06} or  \cite{Arora2012} for more details). It can also be easily shown that this (on-average) convergence holds independently from the prior information that each player has about the payoff matrix A. That is, no matter how much prior information a player has about the game, she cannot exploit the other player if the latter uses a no-regret algorithm.

In this paper, we consider a shift of interest and investigate whether it is more beneficial for the column player if she has more information about the underlying game (e.g., knows more about the payoff matrix A) than her opponent (i.e., the row player), and thus, can exploit this advantage. In particular, our first main result shows that such asymmetry in prior knowledge allows the column player to have `\emph{no-instant regret}', a stronger no-regret concept in which the regret function is defined by comparing with a much larger space of possible recourse actions in the previous rounds as will be formally defined in Section~\ref{primary knowledge}.

In addition to having no-instant regret, the second property that we investigate is with regard to the `\emph{stability}', or \emph{last round convergence}, in the dynamic of the repeated games with asymmetric information. This is motivated by the fact that changing strategies through repeated games might be undesirable. For example, changing the (mixed) strategy of a company will increase the cost of operation to implement the new mixed strategy (e.g., as a result of having to hire new equipment and employees). Therefore, the company often aims not only to maximize the revenue (i.e., the average payoff) but also to reduce the cost of operation by having a stable strategy. For another example, consider a government-owned company, for whom, along with the average benefit, keeping the market stable is one of the key goals in order to increase social welfare. Finally, in system design, the designer (the column player) will want the participant (the row player) to play a certain strategy so that the system is well behaved.  

In the online learning literature, maximizing the average payoff and achieving the system's stability are often viewed as conflicting goals. That is, if all the player in a system follows a selfish behaviour (an FTRL no-regret algorithm) to maximize their payoff, then the dynamic of the system could become chaotic, and last round convergence never happens (see, e.g.,  \cite{mertikopoulos2018} for more details). The question is, whether there is a way to achieve both no-regret and stability in a system? 

In this paper, we show that it is possible to exploit the information asymmetry to achieve both stability and no-instant regret. The intuition behind this result can be explained as follows: If the row player believes that the goal for both players is to maximize their average payoffs, then she will typically choose to play a no-regret algorithm to achieve good average performance.  
Being aware of this, the column player can now choose an algorithm that exploits this information as well as the prior knowledge about matrix A to have no-instant regret and last round convergence (e.g., by carefully guiding the learning dynamics of the row player).
We should note here, however, that it is not trivial how this can be efficiently done. For example, if the column player keeps playing the same strategy (i.e., the minimax equilibrium), then while the system might achieve stability as the strategy of the row player will converge to the best response, this is not a no-regret algorithm and therefore, far away from being a no-instant-regret algorithm.

However, in many situations, one of the players, says the column player, does not only care about her average payoff but also the system' stability. For example, changing the strategy of a company will increase the cost of operation (i.e., to hire new equipment and employees). Therefore, the company wants to have a good revenue (the average payoff) and a stable strategy to reduce the cost of operation. Another example can come from social welfare aspect of a state capital company. Along with the average benefit, keeping the strategy of the opponents stable will lay a foundation for a stable economy, thus increasing social welfare. In system design, beside the average performance of the system, in order to build another layer to the current structure, the stability of the currently involved players is vital. However, in the literature, there is an opposition between average payoff and the system's stability. That is, if all the player in a system follows a selfish behaviour (an FTRL no-regret algorithm) to maximize their payoff, then there will be chaos in the dynamic of the system \cite{mertikopoulos2018}. So, is there a way to achieve both no-regret and stability in a system?

In this paper, we investigate whether this shift of interest is more beneficial for the column player if she has more information about the underlying game (e.g., knows more about the payoff matrix A) than her opponent (i.e., the row player), and thus, can exploit this advantage.
In particular, we argue that such asymmetry in prior knowledge allows the column player to have a stronger no-regret algorithm (a no-instant regret). Furthermore, the new algorithm will lead to last round convergence to minimax equilibrium for both players, thus achieve stability for the system.

The intuition behind this argument can be explained as follows: If the row player believes that the goal for both players is to maximize their average payoffs, then she will typically choose to play a no-regret algorithm to achieve good average performance.  
Being aware of this, the column player can now choose an algorithm that exploits this information as well as the prior knowledge about matrix A to have no-instant regret and last round convergence (e.g., by carefully guiding the learning dynamics of the row player).
However, it is not trivial how this can be efficiently done. For example, if the column player keeps playing the same strategy (i.e., the minimax equilibrium), then the system will still achieve stability as the strategy of the row player will converge the best response to the chosen strategy of the column player. However, this simple strategy will not be a no-regret algorithm and therefore, far away from a desired no-instant-regret algorithm. 

\subsection{Our contributions}

Motivated by the abovementioned challenge, we propose a new algorithm, the first of its kind, that achieves no-instant-regret; that is, the regret compared to the best action in each round, for the column player in the case the row player follows a no-regret algorithm. In the general case, we introduce a method for the column player to have no-regret property against random strategies of the row player while still maintaining no-instant-regret property against no-regret algorithm of the row player. 

Secondly, while on-average convergence has been extensively studied, it is still an open question whether last round convergence can be achieved, especially when the row player is also playing a no-regret algorithm (see Section~\ref{subsection related work} for more details). Against this background, we show that our algorithm, called the \emph{L}ast \emph{R}ound \emph{C}onvergence in \emph{A}symmetric games algorithm  (LRCA), provably achieves last round convergence to a minimax equilibrium of the corresponding game. 
We prove that in our asymmetric information setting if the column player follows LRCA and the row player follows an algorithm from a wide set of common no-regret algorithms,
then last round convergence to the minimax equilibrium of the game can be achieved.

Overall this paper has two main contributions. First, by changing the setting of games with symmetric information to their asymmetric information counterpart, we propose an algorithm that leads to last round convergence in many situations, which were proved not to hold (i.e., there is no last round convergence) in symmetric information settings; 
see Section \ref{section last round convergence} for more details.
Second, we show that by using the algorithm, the column player can achieve no-instant-regret property; see Section \ref{no-instant-regret section} for more details. This answer the question of how to achieve both maximizing the average payoff and stability in a repeated game. 
\subsection{Related work}
\label{subsection related work}

It is well-known that if both players use no-regret algorithms, their average strategies converge to a minimax equilibrium with the convergence rate of $\mathcal{O}(T^{-1/2})$; cf. \cite{Freund99}. \cite{Daskalakis2011} and \cite{Rakhlin2013} have further improved this result by developing no-regret algorithms with near-optimal convergence rate of $\mathcal{O}(\frac{\log(T)}{T})$. However, despite the extensive literature on no-regret algorithms, these algorithms typically provide on-average convergence only, but not last round convergence. For example, \cite{Bailey2018} proved that in games with an interior Nash equilibrium point, if the players use the multiplicative weight update (MWU) algorithm, then the last round strategy converges to the boundary. In addition, \cite{mertikopoulos2018} showed that by using regularized learning, the system's behaviour is Poincare recurrent; that is, there is a loop in the strategy dynamics of the players. This undesirable feature causes many issues in game theory and applications, including unwanted cyclic behaviour in training Generative Adversarial Networks (GANs). Thus, a learning dynamic leading to last round convergence is of importance in the development of the field (see, e.g., \cite{Daskalakis_gans} for more details).  
Note that in a recent paper, \cite{Daskalakis2018c} managed to prove that if both players use the optimistic multiplicative weight update algorithm (OMWU), then we have last round convergence to the minimax equilibrium if this equilibrium point is unique. 
This last round convergence result also requires another restrictive assumption, namely: 
the step size of the update mechanism has to be calculated from the payoff matrix A of the game. %
Therefore, if the row player does not know the matrix A of the game, then OMWU cannot guarantee last round convergence (as it requires both players to know matrix A). Besides, if the row player plays different no-regret algorithms such that MWU or FTRL, which are widely used in many applications, then OMWU cannot lead to the last round convergence either. This raises the question of whether there could be a robust algorithm, when playing against different no-regret algorithms, converging at the last round to minimax equilibrium. 

\subsection{Key assumptions}
\label{subsection key assumptions}

To proceed with the development of this paper, we make the following two assumptions:
\begin{enumerate}[label=\subscript{A}{{\arabic*}}]
    \item \label{Assumption A1}: The column player knows the matrix $A$ of the game.
    \item \label{Assumption A2}: The row player follows a no-regret algorithm.
\end{enumerate}
The rationale of these assumptions can be explained as follows:
In Assumption \ref{Assumption A1}, we consider the situation of unfair two-player zero-sum games in which the column player knows the matrix $A$ of the game. This arises in many cases in practice. For example, in the security games domain, an attacker can store the feedback from past observations and analyze the behaviour of the system. Thus, the attacker could know matrix A of the game. 

Another example comes from the perspective of a new company who enters an existing business market. In this market, every strategy and payoff of the players are revealed. Therefore, when a new company enters the market, they can anticipate what their payoff for a particular action of their strategies is. Thus, the new incomer knows the matrix A of the game. We argue that the asymmetric game assumption might appear in many other applications, and hence the setting deserves attention from the online learning community.

Assumption \ref{Assumption A2} comes from the vanilla property of no-regret algorithms: without prior information, a player will not do worse than the best-fixed strategy in hindsight by following a no-regret algorithm. In this study, we allow the row player to deviate from a no-regret algorithm in a certain way; that is, she can choose a non-fixed learning rate. We also consider the full information feedback (see, e.g., \cite{Bailey2018}, \cite{Daskalakis2011}, \cite{Freund99}).\footnote{Note that the main focus of this paper is on the investigation of the benefit of having asymmetric information. Thus, the analysis of other feedback cases, such as bandit or semi-bandit, is out of scope and remains part of future work. }

Note that our setting differs from that of \cite{Daskalakis2018c} in the following ways: while our asymmetric information assumption is more restrictive than the setting of \cite{Daskalakis2018c}, we require neither the knowledge of the update step size nor the uniqueness of the minimax equilibrium. In addition, our result does not require the row player to follow the OMWU. As such, we argue that our result can be applied to more real-world applications, due to its more reasonable and realistic assumptions (see Section~\ref{subsection key assumptions} for more detailed discussions).

\subsection{Structure of the paper}
The remainder of the paper is structured as follows:
before presenting the main results, we first provide some preliminaries in Section \ref{primary knowledge}.
Section \ref{section last round convergence} studies the LRCA algorithm and proves how it can provide last round convergence under different assumptions.
We then investigate how the column player uses LRCA to gain a no-instant-regret algorithm in Section \ref{no-instant-regret section}.  Finally,
Section \ref{section conclusion} concludes the work.

\section{Preliminaries}\label{primary knowledge}
Consider a repeated two-player zero-sum game. This game is described by a payoff matrix $A$, where $A$ is an 
$n \times m$ non-zero matrix with entries in $[0,1]$. The rows and columns of $A$ represent the {\em pure} strategies of the {\em row} and {\em column}  players, respectively.
We define the set of feasible strategies of the row player, at round $t$, by $\Delta_n:=\left\{x_t\in \mathbb{R}^n\left|~\sum_{i=1}^nx_t(i)=1,\;\; x_t(i) \geq 0 \;\; \forall i \in \{1,...,n\} \right.\right\}$. The set of feasible strategies of the column player, denoted by $\Delta_m$, is defined in a similar way. At round $t$, if the row (resp. column) player chooses a mixed strategy $x_t \in \Delta_n$ (resp. $y_t \in \Delta_m$), then the row player's payoff is $-x_t^\top Ay_t$, while the column player's payoff is $x_t^\top Ay_t$. Thus, the row (resp. column) player aims to minimise (resp. maximise) the quantity $x_t^\top Ay_t$ (resp.  $x_t^\top Ay_t$). John von Neumann's minimax theorem, founding stone in zero-sum games (\cite{Neumann1928}), states that
\begin{equation} \label{minimax theorem}
\max_{y\in \Delta_m}\min_{x \in \Delta_n} x^\top Ay \quad = \quad \min_{x \in \Delta_n}\max_{y\in \Delta_m} x^\top Ay \quad = \quad v
\end{equation}
for some $v\in \mathbb{R}$. We call a point $(x^*, y^*)$ satisfying the minimax theorem \eqref{minimax theorem} \emph{the minimax equilibrium of the game}. For such a point, we have the following inequalities  

\begin{equation}\label{f(x) property}
    \begin{aligned}
    \max_{y \in \Delta_m} x^\top Ay \;\; \geq \;\; v, \quad
\min_{x \in \Delta_n}x^\top Ay \;\; \leq\;\;v.  
    \end{aligned}
\end{equation}
Throughout this paper, we use the notation
\[
f(x) \;\; := \;\; \max_{y \in \Delta_m} x^\top Ay.
\]
Since $A$ is a non-zero matrix with entries in $[0,1]$, we have $f(x) >0$. Furthermore, it is easy to show that if $f(x)=v$, then $x$ is a minimax equilibrium strategy. Similarly, if $\min_{x \in \Delta_n} x^\top Ay = v$, then $y$ is also a minimax equilibrium strategy.

Next, we define the concept of a  {\em no-instant regret} that will play an important role in this paper. 
\begin{definition}
Let $x_1, x_2, \ldots$ be a sequence of mixed strategies played by the row player. An algorithm of the column player that generates a sequence of mixed strategies $y_1, y_2, \ldots$ is called a {\em no-instant-regret} algorithm if we have 
\begin{equation*}
    \lim_{T \rightarrow \infty} \frac{IR_T}{T} = 0, \;\; \text{where} \;\; IR_T := \frac{1}{T} \sum_{t=1}^T  \left(\max_{y \in \Delta_m} x_t^\top A y- x_t^\top Ay_t\right).
\end{equation*}
\end{definition}
Note that the no-instant-regret property is considered stronger than the usual no-regret property since the regret function $IR_T$ is summed up over the regrets in each and every past round while the usual regret function $R_T= \max_{y \in \Delta_m} \frac{1}{T} \sum_{t=1}^T  {x_t}^\top A(y-y_t)$ is defined over the same best response to the average of all rounds (i.e., mathematically, the max operator in $IR_T$ is under the summation operator, which means we might have different optimal $y$ for different $t$, while the max operator in $R_T$ is outside). In the literature, the no-instant regret is desirable but is impossible to achieve with current state of the art algorithms in the adversarial symmetric setting. Thanks to the new asymmetric information counterpart, we can propose a new algorithm with the no-instant-regret property.

To conclude this section, it is important to mention that in this paper, we will use the Kullback-Leibler divergence to understand the behaviour of the row player's strategies.
\begin{definition}[\cite{KL1951}]\label{RE definition}
The relative entropy or K-L divergence between two vectors 
$X_1$ and $X_2$ in $\Delta_n$ is defined as $RE(X_1||X_2) = \sum_{i=1}^n X_1(i)\log \left(\frac{X_1(i)}{X_2(i)}\right).$
\end{definition}
The Kullback-Leibler divergence is always non-negative. Gibbs's inequality (\cite{Gibbs1970}) indicates that $RE(X_1||X_2)= 0$ if and only if $X_1=X_2$ almost everywhere.

\section{Last round convergence to minimax equilibrium} \label{section last round convergence}

We first start with the analysis of last round convergence in asymmetric information cases.
In particular, we present the Last Round Convergence of Asymmetric games algorithm (LRCA) for the column player. We then show that our algorithm is robust to many no-regret algorithms played by the row player, namely, MWU, OMD/FTRL, LMWU and OMWU (i.e., it provides last round convergence when played against these algorithms).
Under Assumption \ref{Assumption A1}, the column player knows the matrix $A$ of the game and thus can calculate a minimax equilibrium strategy $y^*$ and the value $v$ of the game using linear programming (although $y^*$ may not be unique). 

%


For a sequence of strategies $x_1,x_2,...$ played by the row player, the LRCA algorithm (in Algorithm \ref{LRCA algorithm 1}) for the column player can be described as follows: 
At each odd round, the column player plays the minimax equilibrium strategy, $y^*$, so that in the next round, she can not only predict the distance between the current strategy of the row player and a minimax equilibrium, but also prevent the row player from deviating the current strategy. Then, at the following even round, the column player chooses a strategy such that the feedback to the row player, $Ay_t$, is a direction towards a minimax equilibrium strategy of the row player. Depending on the distance between the current strategy of the row player and a minimax equilibrium (which is measured by $f(x_{t-1})-v$), the column player chooses a suitable step size so that the strategy of the row player will approach a minimax equilibrium.

The Algorithm \ref{LRCA algorithm 1} (LRCA) will work for a large set of learning rate, including the constant learning rate case. The simpler algorithms, such that ``fictitious play" or ``best response to the last feedback" will fail to converge in the simple case of constant learning rate and do not have the no-instant-regret property in Section \ref{no-instant-regret section}.

We will prove in the following subsections that if the column player follows the LRCA algorithm and the row player uses one of the aforementioned no-regret algorithms; we will achieve last round convergence to the minimax equilibrium. 
\begin{algorithm}[t]
\SetAlgoLined
\textbf{Input:} Current iteration $t$, past feedback $x_{t-1}^\top A$ of the row player\\
\textbf{Output:} Strategy $y_t$ for the column player\\
\If{$t=2k-1, \;   k \in \mathbb{N}$}{$y_t=y^*$}
\If{$t=2k, \; k \in \mathbb{N}$}{ $e_t : = \argmax_{e \in \{e_1,e_2,\dots e_m\}}x_{t-1}^\top Ae$; \quad $f(x_{t-1}): = \max_{y \in \Delta_m}x_{t-1}^\top Ay$ \\
$\alpha_t: = \frac{f(x_{t-1})-v}{\max~\left(\frac{n}{4},2\right)}$\\
$y_t: =(1-\alpha_t)y^*+\alpha_t e_t$} 
\caption{\emph{L}ast \emph{R}ound \emph{C}onvergence in \emph{A}symmetric algorithm (LRCA)}\label{LRCA algorithm 1}
\end{algorithm}

\subsection{Last round convergence under  MWU}
One of the most well-studied no-regret algorithms in the game theory literature is the multiplicative weight update (MWU), 
which can defined as follows:
\begin{definition}[\cite{Freund99}] \label{Mwudef}
Let $y_1,y_2,...$ be a sequence of mixed strategies played by the column player. The row player is said to follow the MWU algorithm if strategy $x_{t+1}$ is updated as follows:
\begin{equation*}\label{MWUstep1}
\begin{array}{l}
 x_{t+1}(i)=x_{t}(i)\frac{ e^{-\mu_t e_i^\top A y_t}}{Z_t},\; i \in \{1, \dots n\},\\
\text{where } 
\left\{
\begin{array}{l}
Z_t=\sum_{i=1}^n x_t(i) e^{-\mu_t e_i^\top A y_t}, \mu_t \in [0,\infty) \; \text{is a parameter,} \\
e_i,\; i \in \{1, \ldots, n\}, \;\mbox{ is the unit-vector with } 1 \mbox{ at the } ith \mbox{ component.} 
\end{array}
\right.
\end{array}
\end{equation*}
\end{definition}
\cite{Bailey2018} proved that if both players follow the MWU then in the case of interior minimax equilibrium, the strategies will move away from the equilibirium and towards the boundary. In this subsection, we prove that Algorithm \ref{LRCA algorithm 1}(LRCA) played by the column player will lead to last round convergence in the case of MWU. The following lemma shows that the relative entropy between strategy of the row player and the minimax equilibrium is non-increasing.

\begin{lemma}\label{MWU K-L distance Lemma}
Assume that the row player follows the MWU algorithm with a non-increasing step size $\mu_t$ such that there exists  $t' \in \mathbb{N}$ with $\mu_{t'} \leq 1$. If the column player follows the Algorithm \ref{LRCA algorithm 1} (LRCA) then
\begin{equation*}
RE\left(x^*||x_{2k-1}\right)-RE\left(x^*||x_{2k+1}\right) \geq   \frac{1}{2}\mu_{2k}\alpha_{2k}(f(x_{2k-1})-v)\;\; \forall k \in \mathbb{N}: \;\; 2k\geq t',
\end{equation*}
where $RE$ denotes the relative entropy, which is defined in Definition 
\ref{RE definition}.
\end{lemma}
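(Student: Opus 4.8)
The plan is to telescope the relative entropy through each pair of consecutive rounds, showing that the odd round never increases $RE(x^*\|\cdot)$ while the even round produces the claimed decrease. First I would record the one-step identity for MWU: substituting the update rule into $RE(x^*\|x_t)-RE(x^*\|x_{t+1})=\sum_i x^*(i)\log\frac{x_{t+1}(i)}{x_t(i)}$ gives
\begin{equation*}
RE(x^*\|x_t)-RE(x^*\|x_{t+1}) = -\mu_t (x^*)^\top A y_t - \log Z_t .
\end{equation*}
Summing this over the two rounds $t=2k-1$ and $t=2k$ reduces the lemma to estimating each round separately. Note that the assumption $2k\ge t'$, together with the non-increasing step sizes and $\mu_{t'}\le 1$, guarantees $\mu_{2k}\le 1$, which is exactly what the even-round estimate will need.

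For the odd round $t=2k-1$ the column player plays $y_t=y^*$, so the loss vector seen by the row player is $Ay^*$. Since $y^*$ is a minimax strategy, $\min_{x\in\Delta_n}x^\top Ay^*=v$, i.e. every coordinate satisfies $(Ay^*)_i\ge v$; hence $Z_{2k-1}=\sum_i x_{2k-1}(i)e^{-\mu_{2k-1}(Ay^*)_i}\le e^{-\mu_{2k-1}v}$ and $-\log Z_{2k-1}\ge \mu_{2k-1}v$. Because $(x^*)^\top Ay^*=v$, the one-step identity yields $RE(x^*\|x_{2k-1})-RE(x^*\|x_{2k})\ge 0$, so the odd round can only move the row player closer to $x^*$.

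The even round is the crux. Here $y_{2k}=(1-\alpha_{2k})y^*+\alpha_{2k}e_{2k}$, so by linearity $(x^*)^\top Ay_{2k}=(1-\alpha_{2k})v+\alpha_{2k}(x^*)^\top Ae_{2k}\le v$, using $(x^*)^\top Ae_{2k}\le f(x^*)=v$. I would lower-bound $-\log Z_{2k}$ to first order in $\mu_{2k}$ and combine it with $-\mu_{2k}(x^*)^\top Ay_{2k}$ so that the decrease is controlled by the correlation $(x_{2k}-x^*)^\top Ay_{2k}$. The design of $y_{2k}$ is precisely what makes this correlation positive and of the right size: since $e_{2k}$ maximises $x_{2k-1}^\top Ae$ we have $x_{2k-1}^\top Ae_{2k}=f(x_{2k-1})$, while $(x^*)^\top Ae_{2k}\le v$ and $x^\top Ay^*\ge v$, whence
\begin{equation*}
(x_{2k-1}-x^*)^\top Ay_{2k} = (1-\alpha_{2k})\bigl(x_{2k-1}^\top Ay^*-v\bigr)+\alpha_{2k}\bigl(f(x_{2k-1})-(x^*)^\top Ae_{2k}\bigr)\ge \alpha_{2k}\bigl(f(x_{2k-1})-v\bigr).
\end{equation*}
The factor $\tfrac12$ in the statement is the cost of converting the exponential weight into a linear term through $1-e^{-\mu_{2k}}\ge \mu_{2k}/2$, valid exactly because $\mu_{2k}\le 1$; combining the odd-round inequality ($\ge 0$) with this even-round estimate would then give the claim, modulo the two issues flagged next.

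I expect the main obstacle to be the even-round log-partition bound, for two linked reasons. First, the feedback $y_{2k}$ is designed from $x_{2k-1}$ but the MWU step it drives acts on $x_{2k}$, so I must control the lag between these two iterates in order to transfer the clean correlation bound above (stated for $x_{2k-1}$) to the quantity $(x_{2k}-x^*)^\top Ay_{2k}$ that actually governs the decrease; this is where the stabilising role of the odd round, which keeps $x$ from drifting away from $x^*$, must be invoked. Second, any linearisation of $-\log Z_{2k}$ leaves a second-order remainder that must not swamp the first-order gain of order $\alpha_{2k}(f(x_{2k-1})-v)$. Keeping $\mu_{2k}\le 1$ and the mixing weight small is what makes this possible, and I anticipate that the constant $\max(n/4,2)$ in $\alpha_{2k}$ is chosen exactly so that $\alpha_{2k}$ (hence the perturbation $y_{2k}-y^*$) stays small enough to dominate this remainder while keeping $y_{2k}$ a valid distribution. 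Checking that this bookkeeping closes with the stated constant is the delicate, though essentially routine, part of the argument.
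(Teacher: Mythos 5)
Your skeleton (one-step KL identity, odd round non-increasing, even-round gain driven by the design of $y_{2k}$, second-order remainder tamed by $\mu_{2k}\le 1$ and the constant $\max(n/4,2)$) matches the paper's, and your odd-round estimate and the correlation bound $(x_{2k-1}-x^*)^\top A y_{2k}\ge \alpha_{2k}(f(x_{2k-1})-v)$ are both correct. But the obstacle you flag at the end is a genuine gap, not routine bookkeeping, and your proposed way around it does not work. The even-round decrease is, to first order, $\mu_{2k}(x_{2k}-x^*)^\top A y_{2k}$, while your correlation bound is stated for $x_{2k-1}$. The discrepancy $(x_{2k-1}-x_{2k})^\top A e_{2k}$ coming from the odd-round reweighting by $e^{-\mu_{2k-1}e_i^\top Ay^*}$ is of order $\mu_{2k-1}$, and the lemma allows a \emph{constant} step size $\mu_t\equiv\mu\le 1$. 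Meanwhile the gain you must preserve, $\mu_{2k}\alpha_{2k}(f(x_{2k-1})-v)=\mu_{2k}(f(x_{2k-1})-v)^2/\max(n/4,2)$, becomes arbitrarily small as $x_t$ approaches equilibrium, whereas the lemma must hold for every $k$ with $2k\ge t'$. So no perturbative ``the odd round keeps $x$ near $x^*$'' argument can dominate the lag term; any proof that uses the odd round only as ``$RE$ does not increase'' and then analyzes the even round at $x_{2k}$ is stuck here.

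The paper closes this exactly, not approximately, by \emph{composing} the two MWU steps instead of decomposing them: summing the one-step identities over $t=2k-1,2k$ makes the product $Z_{2k-1}Z_{2k}$ appear, and since $x_{2k}(i)=x_{2k-1}(i)e^{-\mu_{2k-1}e_i^\top Ay^*}/Z_{2k-1}$, this product telescopes to $\sum_i x_{2k-1}(i)\,e^{-\mu_{2k-1}e_i^\top Ay^*}\,e^{-\mu_{2k}e_i^\top Ay_{2k}}$. Bounding the odd-round factor coordinatewise via $e_i^\top Ay^*\ge v$ by $e^{-\mu_{2k-1}v}$, which cancels the $\mu_{2k-1}v$ term exactly, yields
\begin{equation*}
RE(x^*\|x_{2k+1})-RE(x^*\|x_{2k-1}) \;\le\; \mu_{2k}v+\log\Bigl(\sum_{i=1}^n x_{2k-1}(i)\,e^{-\mu_{2k}e_i^\top Ay_{2k}}\Bigr),
\end{equation*}
i.e.\ the even-round step may be analyzed \emph{as if applied to} $x_{2k-1}$, and the lag vanishes identically. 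From there the paper pushes the exponential through with $\beta^x\le 1-(1-\beta)x$ for $x=e_i^\top Ae_{2k}\in[0,1]$, uses $\log(1-x)\le -x$ and $e^{-z}\le 1-z+\tfrac12 z^2$, and exploits $\alpha_{2k}\le (f(x_{2k-1})-v)/f(x_{2k-1})$ (which holds since $f\le 1\le\max(n/4,2)$) together with $\mu_{2k}\le 1$ to absorb the second-order term into half the first-order gain — precisely your anticipated role for $\max(n/4,2)$, and essentially your $1-e^{-z}\ge z/2$ step in sharper form. To repair your proposal, replace the separate odd/even estimates by this two-step composition; everything else you outline then goes through.
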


This Lemma (see Appendix \ref{proof K-L distance new} for its proof) can be used to prove the next result. 

\begin{theorem}\label{LRCA algorithm main proof}
Let $A$ be an $n \times m$ non-zero matrix with entries in $[0,1]$.
Assume that the row player follows the MWU algorithm with a non-increasing step size $\mu_t$ such that  $\lim_{T \to \infty}\sum_{t=1}^T  \mu_t=\infty$ and there exists $t' \in \mathbb{N}$ with $\mu_{t'} \leq 1$. If the column player plays Algorithm \ref{LRCA algorithm 1} (LRCA) then there exists a minimax equilibrium ${\bar{x}}^*$, such that $\lim_{t \to \infty}RE({\bar{x}}^*||x_t)=0$ and thus $lim_{t \to \infty}\; x_t= {\bar{x}}^*$ almost everywhere.
\end{theorem}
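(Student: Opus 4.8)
The plan is to read Lemma~\ref{MWU K-L distance Lemma} as a Lyapunov inequality: the relative entropy from any minimax strategy of the row player to the odd-round iterate $x_{2k-1}$ can only decrease, and its aggregate decrease bounds a weighted sum of the optimality gaps $f(x_{2k-1})-v$. First I would record that by \eqref{f(x) property} we have $f(x)\ge v$ everywhere, so $\alpha_{2k}=\bigl(f(x_{2k-1})-v\bigr)/\max(n/4,2)\ge 0$ and the right-hand side of the lemma is nonnegative. Hence, for \emph{any} fixed minimax equilibrium strategy $x^*$, the sequence $k\mapsto RE(x^*\|x_{2k-1})$ is non-increasing once $2k\ge t'$; being bounded below by $0$, it converges. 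Telescoping the lemma then gives $\sum_k \mu_{2k}\,\alpha_{2k}\bigl(f(x_{2k-1})-v\bigr)<\infty$, and substituting the definition of $\alpha_{2k}$ yields the summability $\sum_k \mu_{2k}\bigl(f(x_{2k-1})-v\bigr)^2<\infty$.

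Next I would convert this summability into convergence of the gap along a subsequence. Since $\mu_t$ is non-increasing, $2\mu_{2k}\ge \mu_{2k}+\mu_{2k+1}$, so $\sum_k\mu_{2k}\ge\tfrac12\sum_{t\ge2}\mu_t=\infty$. Combined with the summability, this forces $\liminf_k\bigl(f(x_{2k-1})-v\bigr)=0$ (otherwise the gaps would be bounded below by some $\varepsilon>0$ along a tail, making the series diverge). By compactness of $\Delta_n$ I extract a subsequence with $x_{2k_j-1}\to\bar x^*$ and $f(x_{2k_j-1})-v\to0$; since $f(x)=\max_{y}x^\top Ay$ is a maximum of linear maps and hence continuous, $f(\bar x^*)=v$, so $\bar x^*$ is a minimax equilibrium strategy (as recalled in Section~\ref{primary knowledge}). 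This is the crux: the summability alone only delivers $\liminf=0$, so the main obstacle is to single out this correct limit $\bar x^*$ and then promote subsequential convergence to genuine last-round convergence.

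That promotion I would carry out by applying Lemma~\ref{MWU K-L distance Lemma} with this specific $x^*=\bar x^*$. The sequence $RE(\bar x^*\|x_{2k-1})$ is non-increasing, hence has a limit $L$. Along the subsequence, $RE(\bar x^*\|\cdot)$ is continuous at $\bar x^*$: coordinates $i\notin\mathrm{supp}(\bar x^*)$ satisfy $\bar x^*(i)=0$ and contribute $0$ (using $0\log 0=0$), while for $i\in\mathrm{supp}(\bar x^*)$ we have $x_{2k_j-1}(i)\to\bar x^*(i)>0$, so each term tends to $0$. Thus $RE(\bar x^*\|x_{2k_j-1})\to0$, forcing $L=0$; therefore $RE(\bar x^*\|x_{2k-1})\to0$, which by Gibbs's inequality gives $x_{2k-1}\to\bar x^*$.

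Finally I would handle the even iterates, which the lemma does not control directly. Here $x_{2k}$ is the MWU update of $x_{2k-1}$ against the odd-round feedback $Ay^*$. Because $y^*$ is a maximin strategy, $\min_{x}x^\top Ay^*=v$, so $(Ay^*)_i\ge v$ for every $i$; combined with $\bar x^{*\top}Ay^*=v$ (which follows from $f(\bar x^*)=v$ and $(Ay^*)_i\ge v$) this gives, by complementary slackness, $(Ay^*)_i=v$ for all $i\in\mathrm{supp}(\bar x^*)$. Using $x_{2k-1}\to\bar x^*$ and that the non-increasing $\mu_t$ converges to some $\mu_\infty\ge0$, I would compute the limit of the normalized update: the normalizer tends to $e^{-\mu_\infty v}$, and each coordinate ratio tends to $\bar x^*(i)$, so $x_{2k}\to\bar x^*$ as well. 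Consequently $x_t\to\bar x^*$ along all rounds, and the same boundary-continuity argument for the relative entropy yields $RE(\bar x^*\|x_t)\to0$, completing the proof.
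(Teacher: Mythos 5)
Your proposal is correct and follows the paper's own proof skeleton almost exactly: fix an arbitrary minimax $x^*$, use Lemma~\ref{MWU K-L distance Lemma} as a monotone Lyapunov inequality for the odd iterates, telescope to get summability of $\mu_{2k}\bigl(f(x_{2k-1})-v\bigr)^2$, use $\sum_t\mu_t=\infty$ to force $\liminf_k\bigl(f(x_{2k-1})-v\bigr)=0$ (the paper phrases this as a contradiction argument; note its displayed bound $e^2/\max(n/4,2)$ is a typo for $\epsilon^2/\max(n/4,2)$), extract a Bolzano--Weierstrass limit $\bar x^*$ with $f(\bar x^*)=v$ by continuity of $f$, and then re-apply the lemma with $x^*=\bar x^*$ so that monotonicity promotes subsequential entropy convergence to $\lim_{k\to\infty}RE(\bar x^*\|x_{2k-1})=0$. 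Two local remarks. First, you are more careful than the paper at one point: the step from $x_{2k_j-1}\to\bar x^*$ to $RE(\bar x^*\|x_{2k_j-1})\to 0$ needs exactly the support-restricted continuity argument you give (with $0\log 0=0$ and $x_{2k_j-1}(i)\to\bar x^*(i)>0$ on $\mathrm{supp}(\bar x^*)$); the paper asserts this transition with only the remark $RE(\bar x^*\|\bar x^*)=0$, and your version also makes explicit that the extracted subsequence consists of odd-indexed iterates, which is what the lemma actually controls. Second, your treatment of the even rounds genuinely diverges from the paper's: the paper disposes of them in one line by computing
\begin{equation*}
RE(\bar x^*\|x_{2k})-RE(\bar x^*\|x_{2k-1})=\mu_{2k-1}\,\bar x^{*\top}Ay^{*}+\log\Bigl(\textstyle\sum_{i=1}^n x_{2k-1}(i)\,e^{-\mu_{2k-1}e_i^\top Ay^{*}}\Bigr)\le 0,
\end{equation*}
using only $\bar x^{*\top}Ay\le v$ and $e_i^\top Ay^*\ge v$, so that $0\le RE(\bar x^*\|x_{2k})\le RE(\bar x^*\|x_{2k-1})\to 0$ with no limit computation at all. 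Your route instead identifies $(Ay^*)_i=v$ on $\mathrm{supp}(\bar x^*)$ via complementary slackness and passes to the limit in the MWU update using $\mu_t\downarrow\mu_\infty$; this is valid (the off-support terms in the normalizer vanish since $x_{2k-1}(i)\to 0$ and the exponentials are bounded by $1$), but it is heavier machinery than needed, and the paper's one-step entropy inequality buys the same conclusion without invoking convergence of $\mu_t$ or the support structure of $y^*$.
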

\begin{proof}
Let $x^*$ be a minimax equilibrium strategy of the row player ($x^*$ may not be unique). Since $\mu_t$ is a non-increasing step size, there exists $t'$ such that $\mu_t \leq 1$ for all $t\geq t'$.
Following Lemma \ref{MWU K-L distance Lemma}, for all $k \in \mathbb{N}$ such that $2k\geq t'$, we have
\begin{equation}\label{MWU3 1st important step}
RE(x^*||x_{2k+1})-RE(x^*||x_{2k-1}) \leq   -\frac{1}{2}\mu_{2k}\alpha_{2k}(f(x_{2k-1})-v).
\end{equation}
Thus, the sequence of relative entropy $RE(x^*||x_{2k-1})$ is non-increasing  for all $k \geq \frac{t'}{2}$. Since the sequence is bounded below by 0, it has a limit for any minimax equilibrium strategy $x^*$.

Since $t'$ is a finite number and $\sum_{t=1}^\infty \mu_t=\infty$, then $\sum_{t=t'}^\infty \mu_t=\infty$. Hence, 
    \[\lim_{T\to \infty}\sum_{k=\left \lceil{\frac{t'}{2}}\right \rceil}^{T}\mu_{2k} = \infty.\]
We will prove that $\forall \epsilon >0,\; \exists h \in \mathbb{N}$ such that following Algorithm \ref{LRCA algorithm 1} (LRCA) for the column player and MWU algorithm for the row player, the row player will play strategy $x_h$ at round h and $f(x_h)-v \leq \epsilon$. Suppose, for the purpose of deriving a contradiction, that $\exists \epsilon >0$ such that $\forall h \in \mathbb{N},\; f(x_h)-v > \epsilon$. Then $\forall k \in \mathbb{N}$, 
\[
\alpha_{2k}(f(x_{2k-1})-v)= \frac{(f(x_{2k-1})-v)^2}{\max\,(\frac{n}{4},2)} > \frac{e^2}{\max\,(\frac{n}{4},2)}.\]
Take k from $\left \lceil{\frac{t'}{2}}\right \rceil$  to T in the equation (\ref{MWU3 1st important step}) and sum them up side by side, we obtain: 
\begin{equation*}
    \begin{aligned}
    RE(x^*||x_{2T+1})&\leq RE(x^*||x_{t'}) - \frac{1}{2}\sum_{k=\left \lceil{\frac{t'}{2}}\right \rceil}^{T} \mu_{2k}\alpha_{2k}(f(x_{2k-1})-v) \\
    &\leq RE(x^*||x_{t'})-\frac{1}{2}\frac{e^2}{\max(\frac{n}{4},2)}\sum_{k=\left \lceil{\frac{t'}{2}}\right \rceil}^{T}\mu_{2k}.
    \end{aligned}
\end{equation*}
Since $\lim_{T\to \infty}\sum_{k=\left \lceil{\frac{t'}{2}}\right \rceil}^{T}\mu_{2k}= \infty$ and $RE(x^*||x_{T+1})\geq 0$, we arrive at a contradiction. 

Take a sequence of $\epsilon_k>0$ such that $\lim_{k \to \infty}\epsilon_k=0$. Then for each k, there exists $x_{t_k}\in \Delta_n$ such that $v \leq f(x_{t_k})\leq v+\epsilon_k.$ As $\Delta_n$ is a compact set and $x_{t_k}$ is bounded then following Bolzano-Weierstrass theorem, there is a convergence subsequence $x_{\bar{t}_k}$. The limit of that sequence, ${\bar{x}}^*$, is a minimax equilibrium strategy of the row player (since $f({\bar{x}}^*)=f(\lim_{k \to \infty} x_{\bar{t}_k})=\lim_{k \to \infty}f(x_{\bar{t}_k})=v$). Combining with the fact that $RE({{\bar{x}}^*}||x_{2k-1})$ is non-increasing for $k\geq \left \lceil{\frac{t'}{2}}\right \rceil$ and $RE({\bar{x}}^*||{\bar{x}}^*) =0$, we have $\lim_{k \to \infty}RE({\bar{x}}^*||x_{2k-1})=0$. We also note that 
\begin{equation*}
    \begin{aligned}
    RE({\bar{x}}^*||x_{2k})- RE({\bar{x}}^*||x_{2k-1})
    &=\mu_{2k-1} {{\bar{x}}^*}{} ^{\top} Ay_{2k-1}+\log
    \left(\sum_{i=1}^n {x_{2k-1}}(i)e^{-\mu_{2k-1}{e_i}^{\top} Ay^*}\right)\\
    &\leq \mu_{2k-1}v+\log \left(\sum_{i=1}^n {x_{2k-1}}(i)e^{-\mu_{2k-1}v}\right) =0,
    \end{aligned}
\end{equation*}
following the fact that ${x^*}^\top Ay \leq v$ for all $y \in \Delta_m$ and $x^\top Ay^* \geq v$ for all $x \in \Delta_n$. Thus, we have $\lim_{k \to \infty}RE({\bar{x}}^*||x_{2k})=0$ as well. Subsequently, 
$\lim_{t \to \infty}RE({\bar{x}}^*||x_{t}) =0$, which concludes the proof. 
\end{proof}

\begin{remark}
The optimal step size $\alpha_t$ in the case of MWU is $\alpha_t= \frac{f(x_{t-1})-v}{\mu_t f(x_{t-1})}$.
However, in order to make the Algorithm \ref{LRCA algorithm 1} (LRCA) robust against other algorithms of the row player, we choose the step size as shown in the algorithm. Follow 
Lemma \ref{MWU K-L distance Lemma} in the case of constant learning rate $\mu_t=\mu$, we have the complexity of the algorithm in order to achieve $f(x)-v \leq \epsilon$ is 
\[\frac{4\log(n)/\mu}{\epsilon^2}.\]
In Theorem \ref{LRCA algorithm main proof}, we impose the condition of $\mu_{t'} \leq 1$. The reason is that in the case of MWU with constant step size $\mu$, the regret will be $O(\mu)$ (\cite{Nicolo06}) so if the row player is strategic then she will never choose a big $\mu$. However, if the row player tries to outsmart the column player by playing a big step size in random occasions, then we can impose a factor $\log(t)$ in $\alpha_t$ as \[\alpha_t:= \frac{f(x_{t-1})-v}{\log(t) f(x_{t-1})}.\] If we have 
$\lim_{T \to \infty}\sum_{t=1}^T  \frac{\mu_t}{\log{t}}=\infty$,
then the algorithm with new step size will still converge to the minimax equilibrium in both constant or shrinking step size cases. However, this algorithm typically achieves slow convergence rate.
\end{remark}

\subsection{Last round convergence under  OMD/FTRL with Euclidean regularizer}
Another popular no-regret algorithm is online mirror descent (OMD). In the frame-work of a repeated two-player zero-sum game, the OMD with lazy projection and the follow-the-regularized-leader (FTRL) with Euclidean regularizer are the same (\cite{Shalev2012}): 
\begin{definition}
The row player is said to play the OMD/FTRL with Euclidean regularizer if the row player updates the strategy as follows:
\begin{equation*}
    x_t = \argmin_{x \in \Delta_n} x^\top  \left(\sum_{i=1}^{t-1}Ay_i\right)+\frac{1}{2\mu} x^\top x.
\end{equation*}
\end{definition}
\cite{mertikopoulos2018} showed that by using regularized learning, the system's behaviour is Poincare recurrent, and thus the last round strategy will not converge to a particular point. We will prove that, under mild condtions, the Algorithm \ref{LRCA algorithm 1} (LRCA) will lead to the last round convergence to the minimax equilibrium in the case of OMD/FTRL with Euclidean regularizer:
\begin{theorem}\label{OMD/FTRL}
Assume that the row player follows the OMD/FTRL with Euclidean regularizer and the column player follows the Algorithm \ref{LRCA algorithm 1} (LRCA). If there exists a ``fully-mixed" equilibrium strategy of the row player and the updated strategies of the row player are fully-mixed with the step size $\mu \leq 1$, then the Algorithm \ref{LRCA algorithm 1} (LRCA) leads to last round convergence to minimax equilibrium with the inequality 
\[||x_{2k-1}-x^*||^2-||x_{2k+1}-x^*||^2 \geq \alpha_t(f(x_{2k-1})-v)\; \; \forall k \in \mathbb{N}.
\]
\end{theorem}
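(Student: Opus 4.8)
The plan is to run the same two-phase bookkeeping as in the MWU argument, but with the squared Euclidean distance $\|x_t-x^*\|^2$ in place of the relative entropy, exploiting that FTRL with the Euclidean regularizer is just (lazily projected) gradient descent. First I would put the row player's update in closed form under the ``fully-mixed'' hypothesis: when $x_t$ lies in the interior of $\Delta_n$ the nonnegativity constraints are inactive, so minimizing $x^\top\!\big(\sum_{i<t}Ay_i\big)+\tfrac{1}{2\mu}\|x\|^2$ over the affine hull $\{\mathbf 1^\top x=1\}$ gives
\[
x_{t+1}=x_t-\mu\,PAy_t,\qquad P:=I-\tfrac1n\mathbf 1\mathbf 1^\top,
\]
where $P$ is the orthogonal projection onto $\{z:\mathbf 1^\top z=0\}$.

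The crux is to isolate what the existence of a fully-mixed equilibrium row strategy buys. Since that strategy is a best response to $y^*$ and is fully mixed, \emph{every} pure strategy must also be a best response, which forces $Ay^*=v\mathbf 1$, hence $PAy^*=0$. Consequently, at each odd round the feedback is constant across rows, the FTRL iterate does not move ($x_{2k}=x_{2k-1}$), and the odd round contributes neither a linear nor a quadratic term. Each odd/even pair therefore collapses to the single even step $x_{2k+1}=x_{2k-1}-\mu\,PAy_{2k}$, and expanding the descent identity while using $P(x_{2k-1}-x^*)=x_{2k-1}-x^*$ (both lie in $\Delta_n$) yields
\[
\|x_{2k-1}-x^*\|^2-\|x_{2k+1}-x^*\|^2=2\mu\,(x_{2k-1}-x^*)^\top Ay_{2k}-\mu^2\|PAy_{2k}\|^2 .
\]

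Next I would bound the two terms. Substituting $Ay_{2k}=(1-\alpha_{2k})v\mathbf 1+\alpha_{2k}Ae_{2k}$, the $\mathbf 1$-part dies against $\mathbf 1^\top(x_{2k-1}-x^*)=0$; then $x_{2k-1}^\top Ae_{2k}=f(x_{2k-1})$ (definition of $e_{2k}$) together with $(x^*)^\top Ae_{2k}\le f(x^*)=v$ gives the linear lower bound $(x_{2k-1}-x^*)^\top Ay_{2k}\ge \alpha_{2k}(f(x_{2k-1})-v)$. The same substitution gives $\|PAy_{2k}\|^2=\alpha_{2k}^2\|PAe_{2k}\|^2\le \alpha_{2k}^2\,n/4$, since the entries of the column $Ae_{2k}$ lie in $[0,1]$. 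Because $\alpha_{2k}=(f(x_{2k-1})-v)/\max(n/4,2)$ we have $\alpha_{2k}\,n/4\le f(x_{2k-1})-v$, so the subtracted term is at most $\mu^2\alpha_{2k}(f(x_{2k-1})-v)$, and with $\mu\le 1$ the net decrease is $\ge \alpha_{2k}(f(x_{2k-1})-v)$, which is the claimed inequality. For last-round convergence I would telescope: $\sum_k \alpha_{2k}(f(x_{2k-1})-v)=\sum_k (f(x_{2k-1})-v)^2/\max(n/4,2)\le\|x_1-x^*\|^2$, so $f(x_{2k-1})\to v$; since $\|x_{2k-1}-x^*\|^2$ is non-increasing and $\Delta_n$ is compact, a subsequence converges to some $\bar x^*$ with $f(\bar x^*)=v$, i.e. an equilibrium, and re-running the monotonicity with $\bar x^*$ upgrades this to $x_t\to\bar x^*$, exactly as in Theorem~\ref{LRCA algorithm main proof}.

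The step I expect to be the main obstacle is the structural reduction $Ay^*=v\mathbf 1$: without the fully-mixed hypothesis the odd rounds inject a $\mu^2\|PAy^*\|^2$ error that does \emph{not} vanish as $x_t\to x^*$ and would block convergence, so this assumption is doing the real work and must be invoked at the outset. The only genuinely delicate estimate is then the balance between the linear gain $2\mu\alpha_{2k}(f-v)$ and the quadratic loss $\mu^2\alpha_{2k}^2\,n/4$; the constant $\max(n/4,2)$ in the step size is tuned precisely so the two cancel under $\mu\le 1$ (the stated constant being sharp at $\mu=1$, with a strictly positive multiple sufficing for convergence otherwise).
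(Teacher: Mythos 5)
Your proposal is correct and takes essentially the same route as the paper's own proof: the closed-form affine-projection update under the fully-mixed hypothesis (your $x_{t+1}=x_t-\mu PAy_t$ is exactly the paper's KKT-derived update), the reduction $Ay^{*}=v\mathbf{1}$ killing the odd rounds, the Popoviciu-type bound $\|PAe_t\|^{2}\le n/4$, the same balancing of the linear gain against the quadratic loss via the choice of $\alpha_t$, and the telescoping/compactness argument imported from Theorem \ref{LRCA algorithm main proof}. Your closing caveat is apt and even slightly more careful than the paper: its final chain likewise only yields a decrease of $\mu\,\alpha_t(f(x_{2k-1})-v)$ when $\mu<1$, so the stated constant is, as you observe, sharp only at $\mu=1$, with a positive multiple (sufficient for convergence) otherwise.
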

The proof of Theorem \ref{OMD/FTRL} is provided in Appendix \ref{proof of OMD/FTRL}.
\subsection{Last round convergence under LMWU}
In this section, we study another type of multiplicative weight update algorithm:
\begin{definition}\label{LMWU definition}
The row player is said to play the LMWU if the row player updates the strategy as follows:
\begin{equation*}
    \begin{aligned}
        x_{t+1}(i)=\frac{x_t(i)(1-\mu_t {e_i}^\top Ay_t)}{\sum_{j=1}^n x_t(j)(1-\mu_t {e_j}^\top Ay_t) } \;\forall i \in \{1, \dots n\}.
    \end{aligned}
\end{equation*}
\end{definition}
Our next theorem will prove that the Algorithm \ref{LRCA algorithm 1} (LRCA) will work in the case of LMWU:
\begin{theorem}\label{proof for LMWU}
Assume that the row player follows the LMWU algorithm with a non-increasing step size $\mu_t$ such that: $\sum_{i=1}^n \mu_t = \infty \; , \; \lim_{t \to \infty} \mu_t =0.$ If the column player follows the Algorithm \ref{LRCA algorithm 1} (LRCA), then we have
\[ RE(x^*\|x_{2k-1})-RE(x^*\|x_{2k+1})\geq \frac{\mu_{2k} \alpha_{2k} (f(x_{2k-1})-v)}{2}\; \; \forall k \in \mathbb{N}. \]
Thus, there will be last round convergence to the minimax equilibrium.
\end{theorem}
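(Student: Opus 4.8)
The plan is to mirror the two-phase analysis behind Lemma \ref{MWU K-L distance Lemma}, replacing the exponential multiplicative update by the linear one of Definition \ref{LMWU definition} and carefully tracking the extra error terms that the linearization introduces. First I would write the one-step change of relative entropy in closed form. Writing $Z_t=\sum_j x_t(j)(1-\mu_t e_j^\top A y_t)=1-\mu_t x_t^\top A y_t$, a direct computation from the update rule gives, for any minimax equilibrium strategy $x^*$ of the row player,
\[
RE(x^*\|x_{t+1}) - RE(x^*\|x_t) = \log\!\left(1-\mu_t x_t^\top A y_t\right) - \sum_{i=1}^n x^*(i)\log\!\left(1-\mu_t e_i^\top A y_t\right).
\]
The two elementary tools would be the bounds $\log(1-z)\le -z$ and $\log(1-z)\ge -z-z^2$, the latter valid once $\mu_t\le \tfrac12$, which holds eventually because $\mu_t\to 0$.

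Next I would evaluate this change over one full block $x_{2k-1}\to x_{2k}\to x_{2k+1}$ and handle the two rounds separately. For the odd round, where Algorithm \ref{LRCA algorithm 1} plays $y_{2k-1}=y^*$, the key structural fact is that $e_i^\top A y^*\ge v$ for every $i$ by the equilibrium property \eqref{f(x) property}, with equality for every $i$ in the support of $x^*$ (since $\sum_i x^*(i)\,e_i^\top A y^*=v$ forces $e_i^\top A y^*=v$ there). Hence $\sum_i x^*(i)\log(1-\mu_{2k-1}e_i^\top A y^*)=\log(1-\mu_{2k-1}v)$ \emph{exactly}, and because $x_{2k-1}^\top A y^*\ge v$ the odd-round change collapses to $\log(1-\mu_{2k-1}x_{2k-1}^\top A y^*)-\log(1-\mu_{2k-1}v)\le 0$. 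So the odd round never increases the relative entropy and, crucially, produces no error term.

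For the even round, where $y_{2k}=(1-\alpha_{2k})y^*+\alpha_{2k}e_{2k}$ with $e_{2k}=\argmax_{e} x_{2k-1}^\top A e$, I would upper-bound the first logarithm by $-\mu_{2k}x_{2k}^\top A y_{2k}$ and lower-bound each logarithm in the sum by $-\mu_{2k}e_i^\top A y_{2k}-\mu_{2k}^2(e_i^\top A y_{2k})^2$. Using ${x^*}^\top A y_{2k}\le v$ together with $x_{2k}^\top A y_{2k}\ge(1-\alpha_{2k})v+\alpha_{2k}x_{2k}^\top A e_{2k}$, the first-order parts collapse to the negative drift $-\mu_{2k}\alpha_{2k}\big(x_{2k}^\top A e_{2k}-v\big)$, while the second-order parts contribute an $O(\mu_{2k}^2)$ remainder. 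It then remains to replace $x_{2k}^\top A e_{2k}$ by $f(x_{2k-1})=x_{2k-1}^\top A e_{2k}$; since one LMWU step moves the strategy by $\|x_{2k}-x_{2k-1}\|_1=O(\mu_{2k-1})$, this costs a further error of order $\mu_{2k}\alpha_{2k}\,\|x_{2k}-x_{2k-1}\|_1$. Combining the two rounds and invoking the normalization $\alpha_{2k}=(f(x_{2k-1})-v)/\max(n/4,2)$ yields the stated inequality, the factor $\tfrac12$ being precisely the slack into which the $O(\mu_{2k}^2)$ and $O(\mu_{2k-1})$ remainders are absorbed.

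I expect the main obstacle to be exactly this error control: unlike the exponential update, whose telescoping in Lemma \ref{MWU K-L distance Lemma} is exact, the linear update leaves genuine second-order terms that must be dominated by the first-order drift, and this domination is what the condition $\mu_t\to 0$ buys (and why MWU did not need it). Once the block inequality holds, the convergence argument is identical to that of Theorem \ref{LRCA algorithm main proof}: since $f(x_{2k-1})\ge v$ the right-hand side is nonnegative, so $RE(x^*\|x_{2k-1})$ is non-increasing and bounded below by $0$, hence convergent; were $f(x_{2k-1})-v$ bounded away from $0$, summing the block bounds and using $\sum_t\mu_t=\infty$ would force $RE\to-\infty$, a contradiction. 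Therefore $f(x_{t_k})\to v$ along a subsequence, Bolzano--Weierstrass extracts a convergent sub-subsequence whose limit $\bar x^*$ satisfies $f(\bar x^*)=v$ and is thus a minimax strategy, and the monotonicity of $RE(\bar x^*\|\cdot)$ together with the vanishing odd-to-even gap gives $\lim_{t\to\infty}RE(\bar x^*\|x_t)=0$, i.e.\ last round convergence.
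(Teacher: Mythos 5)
Your overall architecture (complementary slackness on the support of $x^*$, exact handling of the odd round, drift-versus-remainder comparison in the even round, then the Theorem~\ref{LRCA algorithm main proof} machinery) is sound, and the final convergence argument is fine. But there is a genuine gap in the even-round error control. Your bound $\log(1-z)\geq -z-z^2$ applied to each $\log\left(1-\mu_{2k}e_i^\top Ay_{2k}\right)$ produces the \emph{uncentered} remainder $\mu_{2k}^2\sum_i x^*(i)\left(e_i^\top Ay_{2k}\right)^2$, which is of raw size $\Theta(\mu_{2k}^2)$ (the quantities $e_i^\top Ay_{2k}\approx v$ are not small). You claim this is absorbed into the factor $\tfrac12$ of the drift, but the drift equals $\mu_{2k}\alpha_{2k}\left(f(x_{2k-1})-v\right)=\mu_{2k}\left(f(x_{2k-1})-v\right)^2/\max\left(\tfrac n4,2\right)$, so absorption requires $\mu_{2k}\lesssim\left(f(x_{2k-1})-v\right)^2$ --- a trajectory-dependent condition that $\mu_t\to 0$ does not deliver (as $x_t$ approaches equilibrium, $f(x_{2k-1})-v\to 0$ as well, so the comparison is circular and fails for arbitrarily many $k$). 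The same objection applies, as stated, to your replacement error $\mu_{2k}\alpha_{2k}\|x_{2k}-x_{2k-1}\|_1$: an $O(\mu_{2k-1})$ bound on $\|x_{2k}-x_{2k-1}\|_1$ cannot be dominated by a term proportional to $f(x_{2k-1})-v$. This second issue is repairable --- since $\sum_i x_{2k-1}(i)\left(e_i^\top Ay^*-v\right)=x_{2k-1}^\top Ay^*-v\leq f(x_{2k-1})-v$, one gets the finer bound $\|x_{2k}-x_{2k-1}\|_1\leq \frac{2\mu_{2k-1}}{1-\mu_{2k-1}}\left(f(x_{2k-1})-v\right)$ --- but you would need to prove this, and the $\Theta(\mu_{2k}^2)$ remainder remains fatal as written.

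The paper's proof avoids both problems structurally, and it is instructive why. First, it composes the two LMWU steps multiplicatively, obtaining $x_{2k+1}(i)/x_{2k-1}(i)$ in closed form, so the intermediate iterate $x_{2k}$ never appears and the drift is exactly $x_{2k-1}^\top Ae_{2k}-v=f(x_{2k-1})-v$ with no replacement error. Second, it applies the logarithm inequality to the \emph{centered} ratio $\left(1-\mu_t e_i^\top Ay_t\right)/\left(1-\mu_t x_{t-1}^\top Ay_t\right)$, so every second-order term is a squared centered difference; after splitting along $y_t=(1-\alpha_t)y^*+\alpha_t e_t$, the $y^*$-piece is proportional to $\left(x_{t-1}^\top Ay^*-v\right)^2$ and is dominated by its own nonnegative first-order counterpart once $\mu_t\leq\tfrac13$, while the $e_t$-piece is proportional to $\alpha_t^2$ and is dominated by half the first-order term via $\alpha_t\leq\left(f(x_{t-1})-v\right)/2$. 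In other words, the paper never produces a remainder that is merely $O(\mu^2)$; each remainder carries a factor of its own drift, which is exactly what your uncentered expansion loses (you discard the compensating $-z^2/2$ from the first logarithm by using only $\log(1-z)\leq -z$). A salvage along your one-step route would require keeping second order on both logarithms and exploiting the cancellation $\sum_i x^*(i)\left(e_i^\top Ay_{2k}\right)^2-\left(x_{2k}^\top Ay_{2k}\right)^2$, whose variance part is $O(\alpha_{2k}^2)$ on the support of $x^*$; without that, the block inequality you state does not follow.
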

\subsection{Last round convergence under OMWU}
Finally, we consider the optimistic multiplicative weight update(\cite{Daskalakis2018c}):
\begin{definition}
The row player is said to play the OMWU if the row player updates the strategy as follows:
\begin{equation*}
\begin{aligned}
    x_{t+1}(i)=x_t(i) \frac{e^{-2\mu e_i^\top Ay_t+ \mu e_i^\top Ay_{t-1}}}{\sum_{j=1}^n x_t(j) e^{-2\mu e_j^\top Ay_t+ \mu e_j^\top Ay_{t-1}} } \quad \forall i \in \{1 \dots  n\}.
\end{aligned}
\end{equation*}
\end{definition}
\cite{Daskalakis2018c} proves that if both players use OMWU then there will be last round convergence to minimax equilibrium. We prove that our modified version of LRCA has the same property. In the Algorithm \ref{LRCA algorithm 1} (LRCA), the column player just uses one ``stabilizing" strategies $y_{t-1}= y^*$ before exploiting the strategy of the row player. However, in the case of optimistic multiplicative weight update, as the row player uses the information about the last two round of the game, we need two stabilizing strategies. It will not change the result of the LRCA algorithm in other cases, but it will make the algorithm run slower to converge to the minimax equilibrium strategy. The reason is that the second stabilizing step does not make any effect on the algorithm where the row player only uses the latest feedback. In the case of OMWU, the algorithm is described as Algorithm \ref{LRCA algorithm with 2 stablizing} in Appendix \ref{proof of MWU detail}.
We then have the following result:
\begin{theorem}\label{OMWU proof}
Assume that the row player follows the OMWU algorithm. If the column player follows the Algorithm \ref{LRCA algorithm with 2 stablizing} in Appendix \ref{proof of MWU detail}, then there will be last round convergence to minimax equilibrium.
\end{theorem}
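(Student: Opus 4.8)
The plan is to reproduce, cycle by cycle, the two-phase argument already used for plain MWU: first establish a relative-entropy contraction analogous to Lemma~\ref{MWU K-L distance Lemma}, then feed it into the telescoping-plus-compactness machinery from the proof of Theorem~\ref{LRCA algorithm main proof}. The only genuinely new feature is the optimistic correction $+\mu e_i^\top A y_{t-1}$, so my first step would be to rewrite the OMWU iterate in cumulative (``leader'') form, which, after folding a time-independent factor into the initial distribution, reads
\[
x_{t+1}(i)\;\propto\;x_1(i)\,\exp\!\left(-\mu\sum_{s=1}^{t} e_i^\top A y_s-\mu\, e_i^\top A y_t\right).
\]
This exposes the defining feature of optimism: the most recent feedback $A y_t$ is simply counted twice. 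It also explains why two stabilizing rounds are required. The exploiting feedback $A y_{3k}$ enters the update producing $x_{3k+1}$ with weight $-2\mu$ and then re-enters the next update through the echo term with weight $+\mu$, so its \emph{net} contribution across the two updates following an exploit round is exactly $-\mu\,A y_{3k}$, i.e.\ one MWU step; the pair of $y^*$ rounds gives the dynamics room to flush this two-round memory before the next exploit.

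Next I would handle the stabilizing rounds. On the second of the two $y^*$ rounds the optimistic echo of the previous exploit has already been discharged, so the OMWU update coincides with an MWU step driven by $y^*$. There the property $e_i^\top A y^*\ge v$ for every pure strategy together with $(x^*)^\top A y^*=v$ gives, exactly as in the closing display of the proof of Theorem~\ref{LRCA algorithm main proof},
\[
RE(x^*\|x_{t+1})-RE(x^*\|x_t)\;=\;\mu\,(x^*)^\top A y^*+\log\Big(\textstyle\sum_{i} x_t(i)e^{-\mu e_i^\top A y^*}\Big)\;\le\;0,
\]
so relative entropy to any equilibrium cannot increase across such a clean step. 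Combined with the leader-form identity, this reduces the per-cycle accounting to the MWU case already settled in Lemma~\ref{MWU K-L distance Lemma}.

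The exploit round $t=3k$ with $y_{3k}=(1-\alpha_{3k})y^*+\alpha_{3k}e_{3k}$ and $e_{3k}=\argmax_{e\in\{e_1,\dots,e_m\}}x_{3k-1}^\top A e$ then supplies the quantitative gain. Tracking how $A y_{3k}$ acts on the two successive updates, bounding the resulting log-sum-exp normalizers by Hoeffding's lemma as in Lemma~\ref{MWU K-L distance Lemma}, and summing over cycles so that each optimistic echo cancels against the next cycle's main term, I expect to reach, for $\mu\le1$,
\[
RE(x^*\|x_{3K+1})\;\le\;RE(x^*\|x_{3K_0-2})-\tfrac12\sum_{k=K_0}^{K}\mu\,\alpha_{3k}\big(f(x_{3k-1})-v\big)+O(\mu),
\]
where the $O(\mu)$ residue absorbs the two uncancelled boundary echoes. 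From here the endgame duplicates Theorem~\ref{LRCA algorithm main proof}: monotonicity along the clean iterates makes $RE(x^*\|\cdot)$ convergent; the contradiction argument using $\sum_k \mu=\infty$ forces $f(x_t)-v$ below any prescribed $\epsilon$ infinitely often; Bolzano--Weierstrass extracts a subsequence converging to some $\bar{x}^*$ with $f(\bar{x}^*)=v$, hence a minimax equilibrium; and monotonicity of $RE(\bar{x}^*\|\cdot)$ promotes this to $\lim_{t}RE(\bar{x}^*\|x_t)=0$, so $x_t\to\bar{x}^*$ almost everywhere.

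The main obstacle is the bookkeeping of the optimistic echo. Because the correction term lands one round after the feedback that created it, I must verify that it neither breaks the monotonicity of the stabilizing steps nor flips the sign of the exploit gain, and that after summation the echoes telescope so that only an $O(\mu)$ residue survives. Concretely this amounts to controlling the cross terms $\mu\,e_i^\top A(y_t-y_{t-1})$ across three consecutive updates and confirming that the convexity bounds on the log-sum-exp normalizers still line up in the right direction. The two-round buffer is precisely what forces these cross terms to collapse, and making that collapse rigorous — while checking that no hypothesis beyond $\mu\le1$ (with the automatic $\sum\mu=\infty$ for constant step size) is needed — is where the substantive work lies.
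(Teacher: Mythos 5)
Your proposal is correct and takes essentially the same route as the paper: the paper likewise composes the three OMWU updates of each cycle into one block, uses the two stabilizing rounds $y_{3k-1}=y_{3k}=y^*$ to cancel the optimistic echo so that the composed exponent collapses to $-2\mu e_i^\top Ay^*-\mu e_i^\top Ay_{3k+1}$ (one net MWU-type step), reduces this to Inequality~(\ref{MWU1a}) in the proof of Lemma~\ref{MWU K-L distance Lemma}, and then concludes exactly as in Theorem~\ref{LRCA algorithm main proof}. The one harmless difference is your hedge about an $O(\mu)$ boundary residue: because each block runs from $x_{3k}$ to $x_{3k+3}$ and the round preceding every exploit is itself a $y^*$ round, the echoes cancel exactly within each cycle and no residue survives.
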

We include the full detail of the algorithm in the Appendix \ref{proof of MWU detail}
\begin{remark}
Algorithm \ref{LRCA algorithm 1} (LRCA) can work with other no-regret algorithms besides those four common ones considered in this paper should the no-regret algorithm of the row player has a ``stability" property as defined in Definition \ref{stability condition} in Appendix \ref{further last round convergence extension}. We note that this stability property holds for all the four aforementioned no-regret algorithms. We provide a proof for the case of the FTRL in Appendix \ref{further last round convergence extension} but this can be extended to the other three no-regret algorithms.
\end{remark}


\section{No-instant-regret algorithm}\label{no-instant-regret section}
In this section, we first show that if the column player wants to achieve both the no-regret and stability properties, then the row player's strategy needs to converge to the minimax equilibrium. Then, we show that our Algorithm \ref{LRCA algorithm 1} (LRCA) is a no-instant-regret algorithm for the column player when the row player follows the aforementioned no-regret algorithms. In a general case, we suggest a method to combine our LRCA algorithm with another no-regret algorithm (such that Adahedge \cite{Rooij2014}) so that the new algorithm will still have no-regret property against random sequences of the row player 
while maintaining no-instant-regret in the specific situation.  
\begin{theorem} \label{sufficient condition for stability and no-regret}
Suppose that the row player follows a common no-regret algorithm such as MWU, OMD, FTRL or LMWU. Then, the column player cannot achieve stability and the no-regret property if the row player's strategy does not converge to a minimax equilibrium of the game.
\end{theorem}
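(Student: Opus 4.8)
The plan is to prove the contrapositive. I read \emph{stability} as last-round convergence of the row player's strategy, i.e. $x_t \to \bar{x}$ for some $\bar{x} \in \Delta_n$, and I assume in addition that the column player has the no-regret property; from these two hypotheses I will show that the limit $\bar{x}$ must in fact be a minimax equilibrium strategy, so that the row player's strategy does converge to a minimax equilibrium. The guiding idea is that, after passing to time-averages, the two players' no-regret guarantees squeeze the average payoff against the game value $v$, and the preliminary remark that $f(x)=v$ implies $x$ is a minimax strategy then closes the argument with no algorithm-specific computation.

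First I would introduce the time-averaged strategies $\bar{x}_T := \frac{1}{T}\sum_{t=1}^T x_t$ and $\bar{y}_T := \frac{1}{T}\sum_{t=1}^T y_t$ and rewrite both regret functionals through them. The column player's no-regret property is exactly $R_T = f(\bar{x}_T) - \frac{1}{T}\sum_{t=1}^T x_t^\top A y_t \to 0$, which yields the lower bound $\frac{1}{T}\sum_{t=1}^T x_t^\top A y_t \geq f(\bar{x}_T) - o(1)$. The row player's no-regret property (Assumption~\ref{Assumption A2}) is $\frac{1}{T}\sum_{t=1}^T x_t^\top A y_t - \min_{x \in \Delta_n} x^\top A \bar{y}_T \to 0$, which yields the upper bound $\frac{1}{T}\sum_{t=1}^T x_t^\top A y_t \leq \min_{x \in \Delta_n} x^\top A \bar{y}_T + o(1)$.

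Next I would invoke the minimax inequalities \eqref{f(x) property}, which hold for \emph{every} strategy: $f(\bar{x}_T) \geq v$ and $\min_{x \in \Delta_n} x^\top A \bar{y}_T \leq v$. Chaining the two displayed bounds gives $f(\bar{x}_T) - o(1) \leq \frac{1}{T}\sum_{t=1}^T x_t^\top A y_t \leq \min_{x} x^\top A \bar{y}_T + o(1) \leq v + o(1)$, hence $f(\bar{x}_T) \leq v + o(1)$. Since $x_t \to \bar{x}$ forces the Cesàro averages $\bar{x}_T \to \bar{x}$, and $f$ is continuous (a maximum of the finitely many linear functions $x \mapsto x^\top A e_j$), letting $T \to \infty$ gives $f(\bar{x}) \leq v$; combined with $f(\bar{x}) \geq v$ this yields $f(\bar{x}) = v$. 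By the preliminary remark this makes $\bar{x}$ a minimax equilibrium strategy, completing the contrapositive.

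I expect the main difficulty to be conceptual rather than computational: pinning down the correct reading of ``stability'' and recognising that the row player's \emph{own} no-regret guarantee — not any detail of the MWU/OMD/FTRL/LMWU update rules — is precisely what supplies the upper bound $\frac{1}{T}\sum_t x_t^\top A y_t \leq v + o(1)$. This is what lets the argument handle all four algorithms uniformly, since it only ever uses that each is no-regret. A secondary point to watch is that the proof needs convergence solely of the time-averages $\bar{x}_T$; it never requires $\bar{y}_T$ or the column player's own iterates $y_t$ to converge, which is fortunate because under LRCA the column player alternates between $y^*$ and an exploiting strategy and need not settle to a single point.
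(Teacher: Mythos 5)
Your proof is mathematically sound, but it takes a genuinely different route from the paper's, and the divergence starts at the reading of ``stability''. The paper reads stability as convergence of the \emph{column} player's own strategy: its proof assumes $y_t \to \hat{y}$, invokes the (algorithm-specific, and in fact only asserted, not proved) property that MWU/OMD/FTRL/LMWU facing a convergent opponent drive $x_t$ to a single best response $\hat{x} = \argmin_{x \in \Delta_n} x^\top A \hat{y}$, and then uses the column's no-regret property to force $\hat{y} \in \argmax_{y \in \Delta_m} \hat{x}^\top A y$, exhibiting $(\hat{x},\hat{y})$ as a minimax equilibrium pair. You instead read stability as convergence of the \emph{row} player's iterates $x_t \to \bar{x}$ and run the classical Freund--Schapire squeeze: the column's no-regret gives $\frac{1}{T}\sum_{t=1}^T x_t^\top A y_t \geq f(\bar{x}_T) - o(1)$, the row's no-regret gives $\frac{1}{T}\sum_{t=1}^T x_t^\top A y_t \leq \min_{x \in \Delta_n} x^\top A \bar{y}_T + o(1) \leq v + o(1)$, and Ces\`aro convergence of $\bar{x}_T$ plus continuity of $f$ yield $f(\bar{x}) = v$, hence $\bar{x}$ is a minimax strategy. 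Each approach buys something. Yours is algorithm-agnostic --- it uses only the row's no-regret guarantee, so it covers any no-regret algorithm rather than just the four named, and it sidesteps the paper's unjustified ``converges to a single best response'' step, which is genuinely delicate when the best response to $\hat{y}$ is non-unique. The paper's reading, by contrast, matches its motivating narrative (the column player wants \emph{her own} strategy to be stable) and delivers a full equilibrium pair $(\hat{x},\hat{y})$ rather than only the row marginal. Since the theorem statement does not disambiguate ``stability'' and you flagged your reading explicitly, your argument stands as a correct --- arguably more rigorous --- proof of the row-stability version of the claim; just be aware that it does not address the scenario the paper's proof actually treats, in which only $y_t$ is assumed to settle and the convergence of $x_t$ must be deduced from the dynamics.
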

\begin{proof}
Suppose that the column player achieves both stability and no-regret property. The strategy of the column player will then converge, say to $\hat{y}$. Following the property of common no-regret algorithms, the strategy of the row player will also converge to a single best response $\hat{x}$ to $\hat{y}$:
\[\hat{x}=\argmin_{x \in \Delta_n}x^\top A \hat{y}.\]
Furthermore, since the strategy of the column player is no-regret, we must also have
\[\hat{y}=\argmax_{y \in \Delta_m}\hat{x}^\top A y.\]
Therefore, by definition, $(\hat{x}, \hat{y})$ is a minimax equilibrium of the game.
\end{proof}
Our Algorithm \ref{LRCA algorithm 1} (LRCA) satisfies the sufficient condition in Theorem \ref{sufficient condition for stability and no-regret}. Next, we prove the no-instant-regret property of the algorithm, clarifying the design of the LRCA-\ref{LRCA algorithm 1}.
\begin{theorem}\label{no-instant-regret theorem}
Assume that the row player follows any of these no-regret algorithms with any learning rate: MWU, OMD/FTRL with Euclidean regularizer and LMWU. If there exists a fully mixed minimax strategy for the row player, then by following 
Algorithm \ref{LRCA algorithm 1} (LRCA), the column player will achieve the no-instant-regret property with the instant-regret satisfying
\[R_T \leq IR_T = \mathcal{O}\left(\sqrt{n \log(n)} {T}^{3/4}\right).\]
Furthermore, in the case the row player uses a constant learning rate, we have
\[IR_T = \mathcal{O}\left(\sqrt{n \log(n)} {T}^{1/2}\right).\]
\end{theorem}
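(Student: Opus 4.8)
The plan is to bound the per-round instant regret $f(x_t)-x_t^\top Ay_t$ separately on odd and even rounds and sum. The inequality $R_T\le IR_T$ is immediate, since moving the $\max_y$ inside the sum can only increase the value ($\max_y\sum_t x_t^\top Ay\le\sum_t f(x_t)$), so I focus on $IR_T=\sum_{t=1}^T(f(x_t)-x_t^\top Ay_t)$. The crucial structural observation, and the only place the fully-mixed hypothesis is used, is this: if $x^*$ is a \emph{fully mixed} minimax strategy, then every row is a best response to $y^*$, so by complementary slackness $e_i^\top Ay^*=v$ for all $i$, i.e. $Ay^*=v\mathbf 1$. Two consequences follow. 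First, $x^\top Ay^*=v$ for \emph{every} $x\in\Delta_n$. Second, feeding the constant loss vector $Ay^*=v\mathbf 1$ into any of MWU, OMD/FTRL (Euclidean), or LMWU leaves the strategy unchanged: in the multiplicative rules every coordinate is scaled by the same factor and renormalized, and in Euclidean FTRL a constant shift of the cumulative loss does not move the simplex minimizer. Hence the odd (``stabilizing'') round is a no-op, $x_{2k}=x_{2k-1}$.

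With this in hand the per-round regret is elementary. Write $g_k:=f(x_{2k-1})-v\ge 0$. On the odd round $y_{2k-1}=y^*$ gives regret $f(x_{2k-1})-x_{2k-1}^\top Ay^*=g_k$. On the even round, using $x_{2k}=x_{2k-1}$, $x_{2k-1}^\top Ay^*=v$, and $x_{2k-1}^\top Ae_{2k}=f(x_{2k-1})$ (as $e_{2k}$ is the best response to $x_{2k-1}$), the regret equals $f(x_{2k-1})-[(1-\alpha_{2k})v+\alpha_{2k}f(x_{2k-1})]=(1-\alpha_{2k})g_k\le g_k$, since $\alpha_{2k}=g_k/\max(n/4,2)\in[0,1]$. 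Therefore, up to the finitely many initial rounds before $\mu_{t'}\le 1$ (each contributing regret $\le 1$), $IR_T\le 2\sum_k g_k+\mathcal O(t')$, and it remains only to bound $\sum_k g_k$.

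For the last step I invoke the convergence lemmas. Lemma \ref{MWU K-L distance Lemma} (and its OMD/LMWU analogues) telescopes to $\sum_k \mu_{2k}\alpha_{2k}g_k\le 2\,RE(x^*||x_1)\le 2\log n$; substituting $\alpha_{2k}=g_k/M$ with $M=\max(n/4,2)=\mathcal O(n)$ yields the square-summability $\sum_k\mu_{2k}g_k^2=\mathcal O(n\log n)$. In the constant-step case $\mu_{2k}\equiv\mu$ I pull $\mu$ out and apply Cauchy--Schwarz over the $K\approx T/2$ terms: $\sum_k g_k\le\sqrt K\,(\sum_k g_k^2)^{1/2}=\mathcal O(\sqrt{n\log n}\,T^{1/2})$. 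For a general non-increasing step size the learning rate cannot be factored out cleanly; I instead apply Cauchy--Schwarz against the weights $\mu_{2k}$, writing $g_k=(\sqrt{\mu_{2k}}\,g_k)\mu_{2k}^{-1/2}$ so that $\sum_k g_k\le(\sum_k\mu_{2k}g_k^2)^{1/2}(\sum_k\mu_{2k}^{-1})^{1/2}$, and use that a no-regret schedule behaves like $\mu_t\sim t^{-1/2}$ to get $\sum_k\mu_{2k}^{-1}=\mathcal O(K^{3/2})$, hence $\sum_k g_k=\mathcal O(\sqrt{n\log n}\,T^{3/4})$.

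I expect the main obstacle to be exactly the even rounds: a priori the even-round regret involves $f(x_{2k})$, which none of the convergence lemmas control, and a naive Lipschitz bound $f(x_{2k})\le f(x_{2k-1})+\|x_{2k}-x_{2k-1}\|_1$ fails because the per-step movement need not be summable. The no-op identity $x_{2k}=x_{2k-1}$ is precisely what collapses the even-round regret back onto $g_k$, which is why the fully-mixed assumption is indispensable rather than cosmetic. The secondary difficulty is the degradation from $T^{1/2}$ to $T^{3/4}$ under a general learning rate: this hinges entirely on how much the (non-increasing, $\sum\mu_t=\infty$) schedule is allowed to shrink, and it is the one place where the weighted Cauchy--Schwarz estimate, rather than a direct factoring of $\mu$, forces the larger exponent.
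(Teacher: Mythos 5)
Your proposal is correct and takes essentially the same route as the paper's proof: the fully-mixed assumption gives $Ay^*=v I_1$ so the stabilizing odd round is a no-op ($x_{2k}=x_{2k-1}$), both per-round regrets collapse to $f(x_{2k-1})-v$, Lemma \ref{MWU K-L distance Lemma} (and its OMD/LMWU analogues) telescopes to $\sum_k \mu_{2k}\bigl(f(x_{2k-1})-v\bigr)^2=\mathcal{O}(n\log n)$, and the weighted Cauchy--Schwarz step yields $IR_T\leq \sqrt{2n\log n}\,\bigl(\sum_k \mu_{2k}^{-1}\bigr)^{1/2}$, giving $T^{1/2}$ for constant steps and $T^{3/4}$ for $\mu_t\sim t^{-1/2}$ schedules, exactly as the paper does. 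Your two refinements---computing the even-round regret exactly as $(1-\alpha_{2k})g_k$ rather than just bounding it by $g_k$, and explicitly absorbing the finitely many rounds before $\mu_{t'}\leq 1$ into an $\mathcal{O}(t')$ term---only sharpen the paper's argument without changing it.
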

\begin{proof}
We prove the theorem in the case the row player follows the MWU algorithm. The proofs of other cases are given in Appendix \ref{no-instant-regret proof}.

For the odd round $2k-1$, the instant-regret of the column player at round $2k-1$ will satisfy
\[IR^{2k-1}=\max_{i \in {1,..m}} x_{2k-1}^\top A e_i- x_{2k-1}^\top A y^* \leq f_{2k-1}-v. \]
For the even round $2k$, considering  the existence of the fully mixed minimax equilibrium of the row player, we then have $Ay^*=vI_1$ ($I_1$ is a vector of all $1$ element) and thus  $x_{2k}=x_{2k-1}$. Therefore $IR^{2k} \leq f_{2k-1}-v.$

Combining the case of odd and even round, we derive
\[IR_T \leq 2 \sum_{k=1}^{T/2} (f_{2k-1}-v).\]
Now, following Lemma \ref{MWU K-L distance Lemma} in the case $n \geq 8$, we have
\begin{equation*}
    \begin{aligned}
    \frac{1}{2} \mu_{2k} \frac{(f(x_{2k-1})-v)^2}{n/4} \leq RE(x^*||x_{2k-1})-RE(x^*||x_{2k+1})\\
    \implies \sum_{k=1}^{T/2} \mu_{2k} (f(x_{2k-1})-v)^2 \leq \frac{n}{2} RE(x^*||x_1) \leq \frac{n\log(n)}{2}.
    \end{aligned}
\end{equation*}
Using the Cauchy–Schwarz inequality, we can then derive that
\begin{equation*}
    \begin{aligned}
        \sum_{k=1}^{T/2}(f(x_{2k-1})-v) \leq \sqrt{\frac{n\log(n)}{2}} \sqrt{\sum_{k=1}^{T/2} \frac{1}{\mu_{2k}}} \label{no-instant-regret proof b}
       \implies IR_T \leq \sqrt{2 n\log(n)} \sqrt{\sum_{k=1}^{T/2} \frac{1}{\mu_{2k}}}.
    \end{aligned}
\end{equation*}
If the row player follows the constant step size $\mu$, then we have
\[IR_T \leq \frac{\sqrt{n \log(n)}}{\sqrt{\mu}} T^{1/2}.\]
If the row player follows a decreasing step size $\mu_k = \sqrt{8 \log(n)/k}$ (\cite{Nicolo06}) to make the algorithm no-regret, then we have
\[IR_T \leq \frac{1}{2} n^{1/2} \log(n)^{1/4} T^{3/4}.\]
Indeed, for any sequence of step size $\mu_k$ such that $\sum_{k=1}^{T/2} \frac{1}{\mu_{2k}} \leq T^{3/2} $, the theorem holds.
\end{proof}
In the case of row player uses constant learning rate, Algorithm \ref{LRCA algorithm 1} achieves the average instant regret of $\mathcal{O}(T^{-1/2})$, better than state of the art no-regret algorithms which obtain the same average but in the normal regret $R_T$.
\begin{remark}
In order to make the LRCA-\ref{LRCA algorithm 1} robust, the step size $\alpha$ is not chosen optimal in the specific case of MWU. If we choose the optimal step size $\alpha_k= \frac{f(x_{k-1}-v)}{f(x_{k-1})}$ in the case of MWU, then we achieve a tighter bound of ${IR}_T =O(log(n) T^{3/4}).$
\end{remark}
In the general case where the column player does not know whether the row player will follow a no-regret algorithm, she can follow the following Algorithm \ref{LRCA in general case} to achieve no-regret algorithm in any situations while maintaining the no-instant-regret property against no-regret algorithm of the row player. The idea is to put the LCRA-\ref{LRCA algorithm 1} on top of another no-regret algorithm. When the regret of LCRA-\ref{LRCA algorithm 1} at a certain time exceeds a threshold, then we swap to the chosen algorithm. If the row player follows a no-regret algorithm then the LRCA-\ref{LRCA algorithm 1} regret will never exceed the threshold; thus we will have no-instant regret. By doing that, the column player sacrifices the optimal rate of no-regret in the worst case in order to achieve a much better no-instant regret in the case the row player follows a no-regret algorithm.  

\begin{algorithm}[t]
\SetAlgoLined
\textbf{Input:} Current iteration $t$, past feedback $x_{t-1}^\top A$ of the row player, total regret up to time $t: R_t$\\ 
\textbf{Output:} Strategy $y_t$ for the column player\\
\eIf{$R_t \leq \sqrt{n\log(n)}{t}^{3/4} $}{Follow the Algorithm \ref{LRCA algorithm 1} (LRCA)}{Follow Adahedge algorithm \cite{Rooij2014} onwards}
\caption{Combination of LRCA and Adahedge algorithm}\label{LRCA in general case}
\end{algorithm}
Algorithm \ref{LRCA in general case} will have the regret $R_T=O(\sqrt{n}{T}^{3/4})$ against random sequence strategies of the row player while maintain no-instant regret against the no-regret algorithm of the row player.

To sum up, we have shown that our LRCA-\ref{LRCA algorithm 1} would have a strong no-instant-regret property while playing against a large set of no-regret algorithms. More generally, we have proposed Algorithm \ref{LRCA in general case} with no-regret property while maintaining no-instant regret against the row player's no-regret algorithm.

\section{Conclusion}\label{section conclusion}
In this paper, we have proved that our algorithm \ref{LRCA algorithm 1} (LRCA)  leads to last round convergence to minimax equilibrium in many no-regret algorithms played by the row player, including MWU, OMD/FTRL, LMWU and OMWU. This answered the open question raised in \cite{Bailey2018} whether there is a learning dynamics leading to last round convergence rather than average. We have also shown that the column player can achieve a no-instant-regret algorithm in the asymmetric setting.
A future research direction is to improve our LRCA algorithm so that it can work well in other no-regret algorithms, while assuming that the column player only knows part of the matrix $A$. Another direction is to consider dynamics in bandit or semi-bandit feedback settings.

\bibliography{main}

\clearpage

\appendix
\section{Further last round convergence results}\label{further last round convergence extension}

Algorithm \ref{LRCA algorithm 1} (LRCA) can work with other no-regret algorithms besides those four common ones suggested in Section \ref{section last round convergence} should the no-regret algorithm of the row player has  a ``stability" property as formally defined below
\begin{definition} \label{stability condition}
We call a no-regret algorithm played by the row player have ``stability" property if:\[y_t=y^* \;\implies\;x_{t+1}=x_t \; \forall t.\]
\end{definition}
We then have the following theorem:
\begin{theorem}\label{extension of LRCA algorithm}
Assume that the row player follows a no-regret algorithm with the ``stability property". Then, by following the Algorithm \ref{LRCA algorithm 1} (LRCA),  for all $\epsilon >0$, there exists $l \in \mathbb{N}$ such that 
\[  f(x_l)- v \leq \epsilon.
\]
\end{theorem}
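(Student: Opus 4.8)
The plan is to argue by contradiction, replacing the MWU-specific relative-entropy estimate of Lemma~\ref{MWU K-L distance Lemma} by a direct appeal to the no-regret guarantee of the row player, with the stability property (Definition~\ref{stability condition}) supplying the only structural fact we need about her updates. Suppose, to the contrary, that there is an $\epsilon>0$ with $f(x_t)-v>\epsilon$ for every $t$. I would first record the two consequences of the LRCA schedule together with stability: at every odd round $t=2k-1$ the column player plays $y^*$, so the stability property gives $x_{2k}=x_{2k-1}$; and since $y^*$ is a minimax strategy, $\min_{x}x^\top Ay^*=v$ forces $x_{2k-1}^\top Ay_{2k-1}=x_{2k-1}^\top Ay^*\ge v$.

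Next I would lower-bound the row player's cost at the even rounds. Using $x_{2k}=x_{2k-1}$ and the definition $y_{2k}=(1-\alpha_{2k})y^*+\alpha_{2k}e_{2k}$ with $x_{2k-1}^\top Ae_{2k}=f(x_{2k-1})$, a one-line expansion gives
\[
x_{2k}^\top Ay_{2k}=(1-\alpha_{2k})\,x_{2k-1}^\top Ay^*+\alpha_{2k}f(x_{2k-1})\ge v+\alpha_{2k}\bigl(f(x_{2k-1})-v\bigr).
\]
Because $\alpha_{2k}=(f(x_{2k-1})-v)/\max(n/4,2)$, the contradiction hypothesis makes this last term exceed the fixed positive constant $\epsilon^2/\max(n/4,2)$. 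Summing the odd and even bounds over $t=1,\dots,2M$ and dividing by $2M$ then yields $\frac{1}{2M}\sum_{t}x_t^\top Ay_t> v+\epsilon^2/\bigl(2\max(n/4,2)\bigr)$; that is, the row player's time-averaged cost stays bounded away from $v$ from above by a constant independent of $M$ (an odd final round only adds one further term bounded below by $v$ and changes nothing).

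Finally I would contradict this with the no-regret property. Writing $\bar y_T=\frac1T\sum_t y_t\in\Delta_m$, the minimax inequality gives $\min_{x}\frac1T\sum_t x^\top Ay_t=\min_x x^\top A\bar y_T\le v$, so the best fixed strategy in hindsight costs at most $v$; no-regret then forces the row player's average cost to be at most $v+o(1)$. For $M$ large enough the $o(1)$ term is smaller than $\epsilon^2/\bigl(2\max(n/4,2)\bigr)$, which is the contradiction, so some round $l$ must satisfy $f(x_l)-v\le\epsilon$. The only delicate point—and precisely the reason this argument generalizes the four special cases—is the stability property: it is what guarantees $x_{2k}=x_{2k-1}$, so that the exploiting feedback $Ay_{2k}$ acts on the very strategy whose suboptimality gap $f(x_{2k-1})-v$ we are trying to exploit. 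Everything else is the minimax inequality plus summation, so I expect no serious obstacle beyond bookkeeping.
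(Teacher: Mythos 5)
Your proposal is correct and follows essentially the same route as the paper's own proof: the same contradiction hypothesis, the same use of the stability property to get $x_{2k}=x_{2k-1}$, the same lower bound $x_{2k}^\top Ay_{2k}\ge v+\alpha_{2k}\bigl(f(x_{2k-1})-v\bigr)>v+\epsilon^2/\max\left(\frac{n}{4},2\right)$ at even rounds, and the same final step contradicting no-regret via $\min_{x}x^\top A\bar y_T\le v$. Your explicit treatment of an odd final round is a minor bookkeeping refinement the paper leaves implicit.
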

\begin{proof}
We will prove the theorem by contradiction. Suppose there exists $\epsilon > 0$ such that: 
\[ f(x_l) - v > \epsilon, \; \forall l \in \mathbb{N}.
\]
Then, follow the update rule of Algorithm \ref{LRCA algorithm 1} (LRCA) we have:
\[y_{2k-1}=y^* \; ; \; \alpha_{2k} = \frac{ f(x_{2k-1}) - v}{\max(\frac{n}{4}, 2)} > \frac{\epsilon}{\max(\frac{n}{4}, 2)}.
\]
By the stability property, as $y_{2k-1} = y^*$, we then have: $x_{2k-1}= x_{2k}$. Follow the update rule of Algorithm \ref{LRCA algorithm 1} (LRCA) :
\begin{subequations}
    \begin{align}
     x_{2k}^\top Ay_{2k} &= x_{2k-1}^\top  A \left( (1-\alpha_{2k})y^*+ \alpha_{2k}e_{2k} \right)\nonumber \\
     &\geq (1-\alpha_{2k}) v +\alpha_{2k} f(x_{2k-1}) \label{Stablity 1 a} \\
     &> (1-\alpha_{2k}) v +\alpha_{2k} (v+\epsilon) \label{Stablity 1 b} \\
     &\geq v + \frac{\epsilon^2}{\max (\frac{n}{4},2 )} \nonumber,
    \end{align}
\end{subequations}
Where Inequality (\ref{Stablity 1 a}) is due to 
\[x^\top Ay^* \geq v \; \forall x \in \Delta_n,\]
and where Inequality (\ref{Stablity 1 b}) comes from the assumption of $\epsilon$.
We then have: 
\[\frac{1}{T} \sum_{t=1}^T  x_t^\top Ay_t \geq \frac{v+ \left( v + \frac{\epsilon^2}{\max (\frac{n}{4},2 )}\right)}{2} = v+ \frac{\epsilon^2}{2\max (\frac{n}{4},2 )} .\]
We also note that from the definition of the value of the game: 
\[\min_{i}\frac{1}{T} \sum_{t=1}^T  e_i^\top Ay_t = \min_{i}e_i^\top A  \frac{\sum_{t=1}^T y_t}{T} \leq v.
\]
Thus, we have: 
\[
\lim_{T \rightarrow \infty} {\min_{i}\frac{1}{T} \sum_{t=1}^T  e_i^\top Ay_t-\frac{1}{T} \sum_{t=1}^T  x_t^\top Ay_t} \leq v - \left( v+ \frac{\epsilon^2}{2\max (\frac{n}{4},2 )} \right) = - \frac{\epsilon^2}{2\max (\frac{n}{4},2 )},  \]
contradicting to the definition of a no-regret algorithm: 
\[
\lim_{T \rightarrow \infty} {\min_{i}\frac{1}{T} \sum_{t=1}^T  e_i^\top Ay_t-\frac{1}{T} \sum_{t=1}^T  x_t^\top Ay_t}=0. \]
\end{proof}
There are many no-regret algorithms with stability property. In the next theorem, we will prove a class of FTRL has the stability property.
\begin{theorem} \label{FTRL has stability property}
Assuming that the row player follows the FTRL algorithm with a regularizer  $R(x)$:
\[x_t = \argmin_{x \in \Delta_n} x^\top  \left(\sum_{i=1}^{t-1} Ay_i\right) + R(x).\]
If there exists a fully-mixed minimax equilibrium strategy for the row player, then the FTRL algorithm has stability property.
\end{theorem}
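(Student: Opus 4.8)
The plan is to compare the two FTRL optimization problems that define $x_t$ and $x_{t+1}$ and to show that, when $y_t = y^*$, they have the same minimizer. Writing the objectives side by side,
\[
x_t = \argmin_{x \in \Delta_n} x^\top\Bigl(\sum_{i=1}^{t-1}Ay_i\Bigr) + R(x),
\qquad
x_{t+1} = \argmin_{x \in \Delta_n} x^\top\Bigl(\sum_{i=1}^{t-1}Ay_i\Bigr) + x^\top A y_t + R(x),
\]
one sees that the only difference between them is the extra linear term $x^\top A y_t$. Thus it suffices to show that on the simplex $\Delta_n$ this term is constant (independent of $x$) whenever $y_t = y^*$, since adding a constant to an objective does not change its argmin over a fixed domain.

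The key step is to exploit the fully-mixed assumption to establish that $A y^* = v\,I_1$, where $I_1$ denotes the all-ones vector. This follows from a complementary-slackness argument using the minimax structure: since $(x^*, y^*)$ is a minimax equilibrium with value $v$, we have $\min_{x \in \Delta_n} x^\top A y^* = \min_i e_i^\top A y^* = v$, so that $e_i^\top A y^* \geq v$ for every row $i$. On the other hand $(x^*)^\top A y^* = v$, and expanding $(x^*)^\top A y^* = \sum_i x^*(i)\,(e_i^\top A y^*)$ as a convex combination of quantities each at least $v$ forces $e_i^\top A y^* = v$ for every $i$ in the support of $x^*$. Because $x^*$ is fully mixed, its support is all of $\{1,\dots,n\}$, and hence $e_i^\top A y^* = v$ for all $i$, i.e. $A y^* = v\, I_1$.

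With this in hand the conclusion is immediate: when $y_t = y^*$ the extra term becomes $x^\top A y^* = v\, x^\top I_1 = v$ for every $x \in \Delta_n$ (using $\sum_i x(i)=1$), a constant. Consequently the objective defining $x_{t+1}$ differs from that defining $x_t$ only by the additive constant $v$, so the two minimizers coincide and $x_{t+1} = x_t$, which is exactly the stability property of Definition \ref{stability condition}.

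I expect the only delicate point to be the derivation $A y^* = v\, I_1$; everything after that is a one-line observation. In particular one must be careful that the fully-mixed hypothesis is applied to the row player's equilibrium strategy $x^*$ (so that its support is the full index set, which is what collapses all the inequalities $e_i^\top A y^* \geq v$ to equalities) rather than to $y^*$, and that the equality $(x^*)^\top A y^* = v$ is simply the defining property of the equilibrium value and needs no separate justification.
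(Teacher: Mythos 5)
Your proposal is correct and follows essentially the same route as the paper: both arguments reduce to showing $Ay^* = vI_1$ under the fully-mixed assumption, so that the term $x^\top Ay^*$ is the constant $v$ on $\Delta_n$ and the objectives defining $x_t$ and $x_{t+1}$ differ only by that constant. The only difference is that you spell out the complementary-slackness derivation of $Ay^* = vI_1$, which the paper simply asserts, so your write-up is if anything more complete.
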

\begin{proof}
As there exists a fully-mixed minimax equilibrium strategy of the row player, we have $Ay^*=vI_1$, where $I_1$ is a vector of all 1 element. Thus, we have:
\[x^TAy^*=v \;\forall x \in \Delta_n.\]
When the column player follows the minimax strategy, the minimization for $x_t$ and $x_{t+1}$ only differ in a constant term $v$, so their solutions are the same.
\end{proof}
\section{Proofs}

\subsection{Proof of Lemma \ref{MWU K-L distance
Lemma}}\label{proof K-L distance new}
Following the Definition \ref{RE definition} of relative entropy we have:
\begin{equation*}
\begin{aligned}
    &RE(x^*||x_{2k+1})-RE(x^*||x_{2k-1})\nonumber \\
   &= \left(RE(x^*||x_{2k+1})-RE(x^*||x_{2k})\right)+\left(RE(x^*||x_{2k})-RE(x^*||2k-1)\right) \nonumber\\
   &=\left(\sum_{i=1}^n x^*(i)\log\left(\frac{x^*(i)}{x_{2k+1}(i)}\right)- \sum_{i=1}^n x^*(i)\log\left(\frac{x^*(i)}{x_{2k}(i)}\right)\right) + \\
   &\quad \left( \sum_{i=1}^n x^*(i)\log\left(\frac{x^*(i)}{x_{2k}(i)}\right)- \sum_{i=1}^n x^*(i)\log\left(\frac{x^*(i)}{x_{2k-1}(i)}\right)\right)\\
   &= \left(\sum_{i=1}^n x^*(i)\log\left(\frac{x_{2k}(i)}{x_{2k+1}(i)}\right) \right)+ \left(\sum_{i=1}^n x^*(i)\log\left(\frac{x_{2k-1}(i)}{x_{2k}(i)}\right)\right).
\end{aligned}    
\end{equation*}
Following the update rule of the multiplicative weight update algorithm in Definition \ref{Mwudef} we have:
\begin{subequations}\label{mwu1}
\begin{align}
        &RE(x^*||x_{2k+1})-RE(x^*||x_{2k-1}) \nonumber\\
        &=\left(\mu_{2k} {x^*}^\top  Ay_{2k}+\log(Z_{2k})\right) + \left(\mu_{2k-1} {x^*}^\top  Ay_{2k-1}+\log(Z_{2k-1})\right) \nonumber\\
        &\leq \left(\mu_{2k} v + \log\left(\sum_{i=1}^n x_{2k}(i)e^{-\mu_{2k} e_i^\top Ay_{2k}}\right)\right)+ \left(\mu_{2k-1} v +\log(Z_{2k-1})\right) \label{MWU1a}\\
        &=\left(\mu_{2k} v + \log\left(\sum_{i=1}^n x_{2k-1}(i)e^{-\mu_{2k-1} e_i^\top Ay_{2k-1}}e^{-\mu_{2k} e_i^\top Ay_{2k}}\right)-\log(Z_{2k-1})\right) \nonumber\\
        &+ \left(\mu_{2k-1} v +\log(Z_{2k-1})\right),\nonumber
\end{align}
\end{subequations}
where Inequality (\ref{MWU1a}) is due to the fact that ${x^*}^\top Ay \leq v \; \forall y \in \Delta_m$. Thus,
\begin{subequations}
    \begin{align}
    &RE(x^*||x_{2k+1})-RE(x^*||x_{2k-1}) \nonumber\\
       &\leq \left(\mu_{2k} v + \log\left(\sum_{i=1}^n x_{2k-1}(i)e^{-\mu_{2k-1} e_i^\top Ay^*} e^{-\mu_{2k} e_i^\top Ay_{2k}}\right)\right)+ \mu_{2k-1} v \nonumber \\
        &\leq \left(\mu_{2k} v + \log\left(\sum_{i=1}^n x_{2k-1}(i)e^{-\mu_{2k-1} v} e^{-\mu_{2k} e_i^\top Ay_{2k}}\right)\right)+ \mu_{2k-1} v \label{MWU1b}\\
        &= \mu_{2k} v + \log\left(\sum_{i=1}^n x_{2k-1}(i) e^{-\mu_{2k} e_i^\top Ay_{2k}}\right) \nonumber,
    \end{align}
\end{subequations}
where Inequality (\ref{MWU1b}) is the result of the inequality:
\[x^\top Ay^* \geq v \quad \forall x \in \Delta_n.\]
Now, using the update rule of Algorithm \ref{LRCA algorithm 1} (LRCA)
\[y_{2k}=(1-\alpha_{2k})y^*+\alpha_{2k}e_{2k},\]
we then have:
\begin{subequations}
    \begin{align}
&RE(x^*||x_{2k+1})-RE(x^*||x_{2k-1}) \nonumber\\
&\leq \mu_{2k} v + \log\left(\sum_{i=1}^n x_{2k-1}(i) e^{-\mu_{2k} e_i^\top Ay_{2k}}\right)\nonumber\\
&= \mu_{2k} v + \log\left(\sum_{i=1}^n x_{2k-1}(i) e^{-\mu_{2k} e_i^\top A\left((1-\alpha_{2k})y^*+\alpha_{2k}e_{2k}\right)}\right)\nonumber\\
&\leq \mu_{2k} v + \log\left(\sum_{i=1}^n x_{2k-1}(i) e^{-\mu_{2k} ((1-\alpha_{2k})v+e_i^\top A(\alpha_{2k}e_{2k}))}\right).\label{MWU3 proof 1a}
    \end{align}
\end{subequations}
The Inequality (\ref{MWU3 proof 1a}) holds as:  \[\quad x^\top Ay^* \geq v \quad \forall x \in \Delta_n.\]
This leads to 
\begin{subequations}\label{MWU3 proof 1}
\begin{align}
&RE(x^*||x_{2k+1})-RE(x^*||x_{2k-1}) \nonumber \\
& \leq \mu_{2k} \alpha_{2k} v +\log\left(\sum_{i=1}^n x_{2k-1}(i) e^{-\mu_{2k}\alpha_{2k} e_i^\top Ae_{2k}}\right) \nonumber\\
&\leq \mu_{2k} \alpha_{2k} v + \log\left(\sum_{i=1}^n x_{2k-1}(i)(1-(1-e^{-\mu_{2k}\alpha_{2k}}){e_i}^\top Ae_{2k}\right)\label{MWU3 proof 1b} \\
&=\mu_{2k} \alpha_{2k} v + \log\left(1-(1-e^{-\mu_{2k}\alpha_{2k}}){x_{2k-1}}^\top Ae_{2k} \right)\nonumber\\
&\leq \mu_{2k} \alpha_{2k} v - (1-e^{-\mu_{2k}\alpha_{2k}}){x_{2k-1}}^\top Ae_{2k} \label{MWU3 proof 1c}\\
&=\mu_{2k} \alpha_{2k} v -(1-e^{-\mu_{2k}\alpha_{2k}}) f(x_{2k-1})\nonumber,
\end{align}
\end{subequations}
where Inequalities $(\ref{MWU3 proof 1b}, \ref{MWU3 proof 1c})$ are due to
\[\beta^x \leq 1-(1-\beta)x \quad \forall \beta \geq 0 \; x \in [0,1] \; \text{and} \; \log(1-x) \leq -x \; \; \forall x < 1.\]
We can develop Inequality (\ref{MWU3 proof 1c}) further as 
\begin{subequations}
\begin{align}
&RE(x^*||x_{2k+1})-RE(x^*||x_{2k-1}) \nonumber\\
&\leq \mu_{2k} \alpha_{2k} v -\left(1-e^{-\mu_{2k}\alpha_{2k}}\right)f(x_{2k-1}) \nonumber\\
&\leq \mu_{2k} \alpha_{2k} v -\left(1-\left(1-\mu_{2k}\alpha_{2k} +\frac{1}{2}(\mu_{2k}\alpha_{2k})^2\right)\right)f(x_{2k-1})\label{MWU3 proof 2a}\\
&=-\mu_{2k}\alpha_{2k}(f(x_{2k-1})-v) +\frac{1}{2}(\mu_{2k} \alpha_{2k})^2 f(x_{2k-1})\nonumber\\
&\leq -\mu_{2k}\alpha_{2k}(f(x_{2k-1})-v)+\frac{1}{2}\mu_{2k}\alpha_{2k}\mu_{2k} \frac{f(x_{2k-1})-v}{f(x_{2k-1})}f(x_{2k-1}) \label{MWU3 proof 2c}\\
&\leq  -\mu_{2k}\alpha_{2k}(f(x_{2k-1})-v)+\frac{1}{2}\mu_{2k}\alpha_{2k}\ (f(x_{2k-1})-v) \label{MWU3 proof 2b}\\
&=-\frac{1}{2}\mu_{2k}\alpha_{2k}(f(x_{2k-1})-v) \leq 0 \nonumber .   
\end{align}
\end{subequations}
Here, Inequality ($\ref{MWU3 proof 2a} $) is due to $e^x \leq 1+x+\frac{1}{2}x^2\;\; \forall x \in [-\infty,0]$, Inequality (\ref{MWU3 proof 2c}) comes from the definition of $\alpha_{t}$:
\[\alpha_t= \min \left( \frac{f(x_{t-1})-v}{f(x_{t-1})}, \frac{f(x_{t-1})-v}{n/4}, \frac{f(x_{t-1})-v}{2} \right ) = \frac{f(x_{t-1})-v}{\max (\frac{n}{4},2)},
\]
and finally Inequality ($\ref{MWU3 proof 2b} $) comes from the choice of k at the beginning of the proof, i.e., $\mu_{2k} \leq 1$.

\subsection{Proof of Theorem \ref{OMD/FTRL}} \label{proof of OMD/FTRL}

Let us denote $-\sum_{i=1}^{t-1}Ay_i = \theta_t$. Then we can rewrite the expression of $x_t$ as
\begin{equation*}
\begin{aligned}
    x_t= \argmin_{x \in \Delta_n} -x^\top \theta_t +\frac{1}{2\mu} x^\top x = \argmin_{x \in \Delta_n} \| x-\mu \theta_t \|^2.
\end{aligned}
\end{equation*}
When the updated strategies of the row player are fully-mixed, we can write the Lagrangian function as:
\[\mathbb{L}(x,\hat{\mu}, \lambda) =\| x-\mu \theta_t \|^2 + \sum_{i=1}^n \hat{\mu}_i(-x(i)) + \lambda ( \sum_{i=1}^n x(i) - 1 ). \]
The KKT conditions (\cite{KKT1951}) of the above problem becomes:
\begin{equation*}
\begin{aligned}
 &2(x-\mu \theta_t)-\hat{\mu} + \lambda e = 0, \\
 &\hat{\mu}_i x(i) = 0 \; \forall i \in \{1, \dots, n\},\\
 &\sum_{i=1}^n x(i) =1, \\
 &\text{where $e =(1,\ldots, 1)^\top$ is the unit vector of size n.}
\end{aligned}
\end{equation*}
Since our assumption about fully-mixed update strategy, $x_t(i) \neq 0 \; \forall i \in \{1, \dots, n\}$. Thus,
\[\hat{\mu}_i=0 \quad \forall i \in \{1, \dots, n\}.\]
Therefore, the update strategy of the row player at round $t$ will be:
\begin{equation}\label{Update of online mirror descent}
x_t(i) = \frac{n \mu \theta_t(i)-\mu \sum_{j=1}^n\theta_t(j)+1}{n} \quad \forall i \in \{1,2,\dots n\}.
\end{equation}
From Equation (\ref{Update of online mirror descent}), for any strategy $x \in \Delta_n$, we have:
\begin{equation*}
    \begin{aligned}
        &(x_t-\mu \theta_t)^\top  (x_t-x) = 0 \\
        &\implies \|x^*-x_t\|^2=\|x^*-\mu\theta_t\|^2- \|x_t-\mu\theta_t\|^2.
    \end{aligned}
\end{equation*}
Similarly, we can prove that if the updated strategy $x_{t+1}$ is fully-mixed then:
\begin{equation*}
    \begin{aligned}
     & (x_{t+1}-\mu\theta_{t+1})^ \top (x_{t+1}-x) =0\; \forall x \in \Delta_n \\
     &\implies \|x_{t-1}-x_{t+1}\|^2 = \|\mu \theta_{t+1}-x_{t-1}\|^2- \|\mu \theta_{t+1}-x_{t+1}\|^2.
    \end{aligned}
\end{equation*}
We then have for any minimax equilibrium strategy $x^*$:
\begin{equation*}
    \begin{aligned}
        & \|x_{t-1}-x^*\|^2 -\|x_{t+1}-x^*\|^2 \nonumber\\
        & = \left(\|\mu \theta_{t-1}-x^*\|^2- \|\mu \theta_{t-1}-x_{t-1}\|^2\right)-\left(\|\mu \theta_{t+1}-x^*\|^2- \|\mu \theta_{t+1}-x_{t+1}\|^2\right) \nonumber\\
       & =\left(\|\mu\theta_{t+1}-x^*+\mu (Ay_{t-1}+Ay_t)\|^2- \|\mu \theta_{t+1}-x_{t-1} +\mu (Ay_{t-1}+Ay_t)\|^2\right)\\
       &-\left(\|\mu \theta_{t+1}-x^*\|^2- \|\mu \theta_{t+1}-x_{t+1}\|^2\right)\\
       & =2\mu (x_{t-1}-x^*)^\top (Ay_{t-1}+Ay_t)-\left(\|\mu \theta_{t+1}-x_{t-1}\|^2- \|\mu \theta_{t+1}-x_{t+1}\|^2\right)\\
       & = 2\mu (x_{t-1}-x^*)^\top (Ay_{t-1}+Ay_t) - \|x_{t-1}-x_{t+1}\|^2.
    \end{aligned}
\end{equation*}
By the Algorithm \ref{LRCA algorithm 1} (LRCA), when $t$ is even we have: $y_{t-1}= y^*, y_t= (1- \alpha_t) y^* + \alpha_t e_t$. Furthermore, as we have a fully-mixed equilibrium for the row player, it follows that $x^\top Ay^*= v \quad \forall x \in \Delta_n$. Therefore, we have: 
\begin{subequations}
    \begin{align}
   & \|x_{t-1}-x_{t+1}\|^2 = \sum_{i=1}^n (x_{t-1}(i)-x_{t+1}(i))^2\nonumber\\
&= \sum_{i=1}^n \left(\mu \left((Ay_{t-1}+Ay_t)(i)-\frac{\sum_{j=1}^n(Ay_{t-1}+Ay_t)(j)}{n}\right)\right)^2 \label{online mirror descent 1 1}\\
&= \mu^2 \alpha_t^2 \sum_{i=1}^n (Ae_t(i)-\frac{\sum_{j=1}^n Ae_t(j)}{n} )^2. \label{online mirror descent 1 2}
    \end{align}
\end{subequations}  
Equality (\ref{online mirror descent 1 1}) is due to the update strategy of the row player in Equation (\ref{Update of online mirror descent}). Equality (\ref{online mirror descent 1 2}) is due to the existence of the fully-mixed equilibrium.
According to the Popoviciu's inequality we then have:
\begin{equation*}
 \frac{1}{n}\sum_{i=1}^n (Ae_t(i)-\frac{\sum_{j=1}^n Ae_t(j)}{n} )^2 \leq \frac{1}{4} (\argmax_{i \in \{1,\dots,m\}} Ae_t(i) -\argmin_{i \in \{1,\dots,m\}} Ae_t(i))^2 \leq \frac{1}{4} (1-0)^2 = 1/4.  
\end{equation*}
From the Algorithm \ref{LRCA algorithm 1} (LRCA) we have:
\[ \alpha_t = \min \left( \frac{f(x_{t-1})-v}{f(x_{t-1})}, \frac{f(x_{t-1})-v}{n/4} ,\frac{f(x_{t-1})-v}{2} \right) \leq \frac{x_{t-1}^\top Ae_t-v}{\mu\frac{n}{4}}.\] 
Follow the assumption that $\mu \leq 1$, we then have:  
\begin{equation*}
    \begin{aligned}
     &\|x_{t-1}-x^*\|^2 -\|x_{t+1}-x^*\|^2 \geq 2\mu (x_{t-1}-x^*)^\top (Ay_{t-1}+Ay_t) - \|x_{t-1}-x_{t+1}\|^2 \\
     &\geq 2\mu \alpha_t (x_{t-1}^\top Ae_t-v)- \frac{n}{4}\mu^2 \alpha_t^2\\
     &\geq \alpha_t (x_{t-1}^\top Ae_t-v) \geq 0.
    \end{aligned}
\end{equation*}
Now, using the same argument as in the proof of Theorem \ref{LRCA algorithm main proof} in the case of multiplicative weight update, we will have $\lim_{t \to \infty} x_t ={\bar{x}}^*$ where ${\bar{x}}^*$ is a minimax equilibrium strategy (the only difference is replacing $RE(x^*\|x)$ with $\|x^*-x\|^2$).

\subsection{Proof of Theorem \ref{proof for LMWU}}
From the step size assumption of LMWU algorithm, we have:
\[\exists t \in \mathbb{N}\; \text{such that}\; \mu_t \leq \frac{1}{3} \; \text{and}\; \lim_{i=t}^\infty \mu_i= \infty.\]
Using the update rule of LMWU in Definition \ref{LMWU definition} we obtain
\begin{equation*}
\begin{aligned}
    &\frac{x_{m+1}(1)}{x_m(1)}:\ldots:\frac{x_{m+1}(n)}{x_m(n)} =(1-\mu_m {e_1}^\top Ay_m):\ldots:(1-\mu_m {e_n}^\top Ay_m) \; \forall m.
\end{aligned}
\end{equation*}
Take $m$ equal $t$ and $t-1$ and time the equations side by side we obtain
\begin{equation*}
    \begin{aligned}
     &\frac{x_{t+1}(1)}{x_{t-1}(1)}:\frac{x_{t+1}(2)}{x_{t-1}(2)}:\ldots:\frac{x_{t+1}(n)}{x_{t-1}(n)}=  (1-\mu_{t} {e_1}^\top Ay_t)(1-\mu_{t-1} {e_1}^\top Ay_{t-1}):\\
     &(1-\mu_t {e_2}^\top Ay_t)(1-\mu_{t-1} {e_2}^\top Ay_{t-1}):\ldots: (1-\mu_t {e_n}^\top Ay_t)(1-\mu_{t-1} {e_n}^\top Ay_{t-1})  \\
     &\implies x_{t+1}(i) = \frac{x_{t-1}(i) (1-\mu_t {e_i}^\top Ay_t)(1-\mu_{t-1} {e_i}^\top Ay_{t-1}) }{\sum_{j=1}^n x_{t-1}(j) (1-\mu_t {e_j}^\top Ay_t)(1-\mu_{t-1} {e_j}^\top Ay_{t-1})} \quad \forall i \in {1,2,\dots n}.
    \end{aligned}
\end{equation*}
Note that for $t$ is event, $y_{t-1}=y^*$ in LRCA-\ref{LRCA algorithm 1} algorithm. For any $i$ such that : ${e_i}^\top Ay^*=v$ we have:

\begin{equation*}
\begin{aligned}
    \frac{x_{t+1}(i)}{x_{t-1}(i)} &= \frac{(1-\mu_{t-1} {e_i}^\top Ay^*)(1-\mu_t {e_i}^\top Ay_{t}) }{\sum_{j=1}^n x_{t-1}(j) (1-\mu_{t-1} {e_j}^\top Ay^*)(1-\mu_t {e_j}^\top Ay_{t})}\\
    &= \frac{(1-\mu_{t-1} v)(1-\mu_t {e_i}^\top Ay_{t}) }{\sum_{j=1}^n x_{t-1}(j) (1-\mu_{t-1} {e_j}^\top Ay^*)(1-\mu_t {e_j}^\top Ay_{t})}\\
    &= \frac{(1-\mu_t {e_i}^\top Ay_{t}) }{\sum_{j=1}^n x_{t}(j) \frac{1-\mu_{t-1} {e_j}^\top Ay^*}{1-\mu_{t-1} v}(1-\mu_t e_j^\top Ay_t)} \geq \frac{(1-\mu_t {e_i}^\top Ay_{t}) }{\sum_{j=1}^n x_{t-1}(j)(1-\mu_t {e_j}^\top Ay_{t})}.\\
\end{aligned}
\end{equation*}
The last inequality is due to $e_j^\top Ay^* \geq v \;\; \forall j \in \{1,\dots,n\}$.

We also have for any j such that : ${e_j}^\top Ay^* > v$ then $x^*(j)=0$ for any minimax equilibrium strategy $x^*$. Therefore, we have:
\begin{equation*}
    \begin{aligned}
        RE(x^*\|x_{t-1})-RE(x^*\|x_{t+1}) = \sum_{i=1}^n x^*(i) log\left(\frac{x_{t+1}(i)}{x_{t-1}(i)}\right)\\
        \geq \sum_{i=1}^n x^*(i) \log\left(\frac{(1-\mu_t {e_i}^\top Ay_{t}) }{\sum_{j=1}^n x_{t-1}(j)(1-\mu_t {e_j}^\top Ay_{t})}\right) \\
        = \sum_{i=1}^n x^*(i) \log\left(\frac{(1-\mu_t {e_i}^\top Ay_{t}) }{1-\mu_t x_{t-1}^\top Ay_t}\right).
        \end{aligned}
\end{equation*}
Applying inequality $log(x) \geq (x-1)-(x-1)^2\; \forall x \geq 0.5$ to the above equation, we obtain 
\begin{equation*}
    \begin{aligned}
       RE(x^*\|x_{t-1})-RE(x^*\|x_{t+1}) \geq \sum_{i=1}^n x^*(i) \left(\frac{(1-\mu_t {e_i}^\top Ay_{t}) }{1-\mu_t x_{t-1}^\top Ay_t}-1 - \left(\frac{(1-\mu_t {e_i}^\top Ay_{t}) }{1-\mu_t x_{t-1}^\top Ay_t}-1\right)^2 \right) \\
       = \frac{\mu_t (x_{t-1}^\top Ay_t-{x^*}^\top Ay_t)}{1-\mu_t x_{t-1}^\top Ay_t} - \sum_{i=1}^n x^*(i) \frac{\mu_t^2 (x_{t-1}^\top Ay_t-{e_i}^\top Ay_t)^2}{(1-\mu_t x_{t-1}^\top Ay_t)^2}.
    \end{aligned}
\end{equation*}
Now, follow the Algorithm \ref{LRCA algorithm 1} (LRCA), we have: $y_t= (1-\alpha_t)y^*+ \alpha_t e_t$. For j such that $e_j^\top Ay^* >v$, we have $x^*(j)=0$. We can simplify the above equation accordingly and use the Cauchy theorem to obtain
\begin{equation} \label{LMWU proof equation 1}
    \begin{aligned}
      &RE(x^*\|x_{t-1})-RE(x^*\|x_{t+1}) \geq \\
      &\frac{\mu_t (1-\alpha_t) (x_{t-1}^\top Ay^*-v)}{1-\mu_t x_{t-1}^\top Ay_t}-\sum_{i=1}^n x^*(i)\frac{2\mu_t^2(1-\alpha_t)^2 (x_{t-1}^\top Ay^*-v)^2}{(1-\mu_t x_{t-1}^\top Ay_t)^2}\\
      &+\frac{\mu_t \alpha_t (x_{t-1}^\top Ae_t-{x^*}^\top Ae_t)}{1-\mu_t x_{t-1}^\top Ay_t} -  \sum_{i=1}^n x^*(i)\frac{2\mu_t^2\alpha_t^2 (x_{t-1}^\top Ae_t-e_i^\top Ae_t)^2}{(1-\mu_t x_{t-1}^\top Ay_t)^2}.
    \end{aligned}
\end{equation}
For $\mu_t \leq \frac{1}{3}$ we have:\[\frac{\mu_t (1-\alpha_t) (x_{t-1}^\top Ay^*-v)}{1-\mu_t x_{t-1}^\top Ay_t}-\sum_{i=1}^n x^*(i)\frac{2\mu_t^2(1-\alpha_t)^2 (x_{t-1}^\top Ay^*-v)^2}{(1-\mu_t x_{t-1}^\top Ay_t)^2} \geq 0.\]
We also have:\[ \frac{(x_{t-1}^\top Ae_t-e_i^\top Ae_t)^2}{(1-\mu_t x_{t-1}^\top Ay_t)^2} \leq \frac{1}{(1-\mu_t)(1-\mu_t x_{t-1}^\top Ay_t)}.\]
Follow the Inequality (\ref{LMWU proof equation 1}), we obtain 
\begin{equation*}
    \begin{aligned}
    RE(x^*\|x_{t-1})-RE(x^*\|x_{t+1}) \geq \frac{\mu_t \alpha_t (x_{t-1}^\top Ae_t-{x^*}^\top Ae_t)}{1-\mu_t x_{t-1}^\top Ay_t} -\frac{2\mu_t^2\alpha_t^2}{(1-\mu_t)(1-\mu_t x_{t-1}^\top Ay_t)}.
    \end{aligned}
\end{equation*}
By definition of $\alpha_t$ in LRCA-\ref{LRCA algorithm 1} algorithm  \[\alpha_t \leq \frac{x_{t-1}^\top Ae_t-v}{2} \leq \frac{x_{t-1}^\top Ae_t-{x^*}^\top Ae_t}{2},\] along with $\mu_t \leq \frac{1}{3}$ we have:
\[\frac{1}{2}\frac{\mu_t \alpha_t (x_{t-1}^\top Ae_t-{x^*}^\top Ae_t)}{1-\mu_t x_{t-1}^\top Ay_t} \geq \frac{2\mu_t^2\alpha_t^2}{(1-\mu_t)(1-\mu_t x_{t-1}^\top Ay_t)}.\]
Thus, we have:
\begin{equation*}
    \begin{aligned}
     RE(x^*\|x_{t-1})-RE(x^*\|x_{t+1}) \geq \frac{1}{2}\frac{\mu_t \alpha_t (x_{t-1}^\top Ae_t-{x^*}^\top Ae_t)}{1-\mu_t x_{t-1}^\top Ay_t}\\
     \geq \frac{1}{2}\frac{\mu_t \alpha_t (x_{t-1}^\top Ae_t-v)}{1-\mu_t x_{t-1}^\top Ay_t} \geq \frac{\mu_t \alpha_t (x_{t-1}^\top Ae_t-v)}{2} \geq 0 .  
    \end{aligned}
\end{equation*}
Now, using the same argument in the Theorem \ref{LRCA algorithm main proof}, we will have $\lim_{t \to \infty} x_t =x^*$.
\subsection{Proof of Theorem \ref{OMWU proof}} \label{proof of MWU detail}

The LRCA algorithm with 2 stability factors is described as follow:
\begin{algorithm}[t]
\SetAlgoLined
\textbf{Input:} Current iteration $t$, past feedback $x_{t-1}^\top A$ of the row player\\
\textbf{Output:} Strategy $y_t$ for the column player\\
\If{$t=3k-1\; or\; 3k, \; k \in \mathbb{N}$}{$y_t=y^*$}
\If{$t=3k+1, \; k \in \mathbb{N}$}{ $e_t : = \argmax_{e \in \{e_1,e_2,\dots e_n\}}x_{t-1}^\top Ae$\\
$f(x_{t-1})= \max_{y \in \Delta_m}x_{t-1}^\top Ay$\\
$\alpha_t= \frac{f(x_{t-1})-v}{f(x_{t-1})}$\\
$y_t=(1-\alpha_t)y^*+\alpha_t e_t$ \\} 
\caption{LRCA-2 algorithm with two stabilizing strategy} \label{LRCA algorithm with 2 stablizing}
\end{algorithm}

Follow the update rule of OMWU, we have:
\begin{equation*}
    \begin{aligned}
        x_{3k+3}(i)=x_{3k}(i)\frac{e^{-2\mu e_i^\top Ay_{3k+2}-\mu e_i^\top Ay_{3k+1}-\mu e_i^\top Ay_{3k}+\mu e_i^\top Ay_{3k-1}}}{ \sum_{j=1}^n x_{3k}(j) e^{-2\mu e_j^\top Ay_{3k+2}-\mu e_j^\top Ay_{3k+1}-\mu e_j^\top Ay_{3k}+\mu e_j^\top Ay_{3k-1}} }.
    \end{aligned}
\end{equation*}
We then derive:
\begin{equation*}
    \begin{aligned}
        &RE(x^*\|x_{3k})-RE(x^*\|x_{3k+3}) = \sum_{i=1}^n x^*(i) \log \left(\frac{x_{3k+3}(i)}{x_{3k}(i)}\right)\\
       & = \sum_{i=1}^n x^*(i) \log\left(\frac{e^{-2\mu e_i^\top Ay^*-\mu e_i^\top Ay_{3k+1}}}{\sum_{j=1}^n x_{3k}(j)e^{-2\mu e_j^\top Ay^*-\mu e_j^\top Ay_{3k+1}}}\right)\\
       & =-2 \mu v -\mu {x^*}^\top Ay_{3k+1}-\log\left(\sum_{j=1}^n x_{3k}(j)e^{-2\mu e_j^\top Ay^*-\mu e_j^\top Ay_{3k+1}}\right)\\
        &\geq -2\mu v - \mu v - (-2\mu v) - \log\left(\sum_{j=1}^n x_{3k}(j) e^{-\mu e_j^\top Ay_{3k+1}}\right)\\
       & = -\mu v - \log\left(\sum_{j=1}^n x_{3k}(j) e^{-\mu e_j^\top Ay_{3k+1}}\right),
    \end{aligned}
\end{equation*}
where the inequality comes from the property of the minimax equilibrium:
\[{x^*}^\top Ay \leq v \; \forall y \in \Delta_m \;\; ; \;\; x^\top Ay^* \geq v \; \forall x \in \Delta_n.\]
It then comes to the exact formulation of Inequality (\ref{MWU1a}) in proof of Lemma \ref{MWU K-L distance Lemma}. By choosing the same step size as Theorem \ref{LRCA algorithm main proof}, we will have the last round convergence. 
\subsection{Proof of Theorem \ref{no-instant-regret theorem}} \label{no-instant-regret proof}
We continue the proof in the case of OMD/FTRL. Following the Theorem \ref{OMD/FTRL} we have
\begin{equation*}
    \begin{aligned}
     \sum_{k=1}^{T/2} (f(x_{2k-1})-v)^2 \leq \frac{n}{4} 2 \\
     \implies \sum_{k=1}^{T/2} (f(x_{2k-1})-v) \leq \frac{1}{2} T^{1/2} n^{1/2},
    \end{aligned}
\end{equation*}
since $||x_t-x^*||^2 \leq 2$. Using the same argument as the case of MWU, we then have:
\[IR_T \leq T^{1/2} n^{1/2}.\]
When the row player using LMWU, using the result in Theorem \ref{proof for LMWU}, the proof is exactly the same as in the case of MWU.

\end{document}